\providecommand{\U}[1]{\protect\rule{.1in}{.1in}}
\newtheorem{theorem}{Theorem}[section]
\newtheorem{lemma}[theorem]{Lemma}
\newtheorem{proposition}[theorem]{Proposition}
\newtheorem{definition}[theorem]{Definition}
\numberwithin{equation}{section}
\newenvironment{proof}[1][Proof]{\noindent\textbf{#1.} }{\ \rule{0.5em}{0.5em}}
\begin{document}

\begin{center}
.

\vspace{0.4in}

{\LARGE \textbf{Statistically Optimal Strategy  Analysis of a Competing Portfolio
Market with a Polyvariant Profit Function}}

\vspace{0.3in}

\smallskip\textsf{Bohdan Yu. Kyshakevych*, Anatoliy \ K. Prykarpatsky**, Denis
Blackmore***, Ivan P. Tverdokhlib****}

\medskip

\textsf{*) Department of Management, The Ivan Franko State Pedagogical
University,}

\textsf{Drohobych, Lviv Region, Ukraine}

\textsf{bogdan.kysh@gmail.com, peles@mail.ru}

\smallskip

\textsf{**) Department of Mining Geodesics, AGH-University of Science and
Technology,}

\textsf{Krakow, 30059, Poland and Department of Management, }

\textsf{The Ivan Franko State Pedagogical University,}

\textsf{Drohobych, Lviv Region, Ukraine}

\textsf{pryk.anat@ua.fm, prykanat@cybergal.com}

\smallskip

\textsf{***) Department of Mathematical Sciences and Center for Applied
Mathematics and }

\textsf{Statistics, New Jersey Institute of Technology, Newark, NJ 07102-1982}

\textsf{deblac@m.njit.edu}

\bigskip

\textsf{****) Department of Economics, The Ivan Franko National University, }

\textsf{Lviv 79001, Ukraine}

\textsf{i\_tverdok@franko.lviv.ua}
\end{center}

\vspace{0.2in}

\noindent\textbf{ABSTRACT: }A competing market model with a polyvariant profit
function that assumes \textquotedblleft zeitnot\textquotedblright\ stock
behavior of clients is formulated within the banking portfolio medium and then
analyzed from the perspective of devising statistically optimal strategies. An
associated Markov process method for finding an optimal choice strategy for
monovariant and bivariant profit functions is developed. Under certain
conditions on the bank "promotional" \ parameter with respect to the "fee"
\ for a missed share package transaction and at an asymptotically large enough
portfolio volume, universal transcendental equations - determining the optimal
share package choice among competing strategies with monovariant and bivariant
profit functions - are obtained.

\bigskip

\noindent\textbf{Keywords:} Zeitnot market modeling, statistically optimal
strategy, Markov process, asymptotic analysis

\medskip

\noindent\textbf{PACS Classification: }89.65.Gh

\section{Introduction}

Ever since the pioneering work of Markowitz \cite{mark}, the statistical
analysis of numerous types of portfolio markets - especially from the
perspective of formulating optimal choice strategies - has become an
increasingly active area of research. Moreover, many of the fruits of this
research have been adopted and standardized in a variety of influential
financial treatises such as Berezovsky \& Gnedin \cite{Y-1} and Brealy \&
Myers \cite{BM}. With the recent successes of financial mathematics and the
keen interest (some would say obsession) fueled by the uncertainty and
volatility of current economic markets, it is not surprising that there has
been something of an explosion in statistically optimal market strategy papers
such as Blanchet-Scalliet et al. \cite{BEJM}, Bronshtein \& Zav'yalova
\cite{BZ}, Chan \& Yung \cite{CY}, Davis et al. \cite{a-6, a-7}, Gollier
\cite{Goll}, Kyshakevych et al. \cite{Y-1a, Y-1b}, Maslov \cite{Mas}, Okui
\cite{Okui}, Reidel \cite{R}, Sun et al. \cite{MIT}, and Ye \& Peng \cite{YP},
to name just a few. Here we adapt and extend some of the techniques developed
in \cite{Y-1a, Y-1b} in order to add another piece to the puzzle of optimal
strategy formulation: one that treats \textquotedblleft
zeitnot\textquotedblright\ markets with polyvariant profit functions. The
zeitnot (not enough time) assumption imposes a strong time-horizon dependence
on our model, and establishes a certain commonality with the work in
\cite{BEJM}, but our work also has some striking differences with this and the
other research appearing in the literature.

{It is a well known that stock markets within the banking medium have a
regulative influence on a country's economic well being. This medium may have
within its portfolio large share packages of diverse business-industrial
structures, ordered by means of some natural indices of their
financial-economic attractiveness or worth to a potential client-buyer. For
modern zeitnot stock market processes, both at the fixed time constraint and
bounded access to the full resource information about share financial-economic
value, an optimal choice strategy \cite{a-6, Y-1,P-K,K-N,....} , identifying the most
desirable share package from a particular bank portfolio, assumes a great deal
of importance for clients.}

{The situation becomes much more complicated when many client-buyers are in
competition, and then a nontrivial fast choice problem arises subject to the
most worth share package within the portfolio. For, as was already mentioned
above, the "zeitnot" \ market character of such share market operations
provides a client with only comparative information data about their worth
during the choice process. Namely, if a client-buyer chooses some share
package from the bank portfolio, he or she can after learning its basic
characteristics buy it right away, or return the request back to the portfolio
and pass ahead to become familiar with a next share package. If its worth
characteristic proves to be equal or lower than those previously considered,
the client-buyer will right away pass on to choosing a next share packages
until he or she finds a share package with a worth characteristic higher than
all those considered previously. In this case the client-buyer should make a
decision as to whether this package is potentially the most valuable among all
the possible choices and stops the process by purchasing it. If the
client-buyer decides not to buy this share package, then he or she should
proceed to analyzing the worth characteristics of the next packages, taking
into account that the portfolio volume is finite and the market time is fixed.
}

{If there are two or more clients-buyers, a similar choice strategy subject to
the most valuable share package is followed, and based on an analysis of the
relative characteristics, both decide to buy the package, then the
client-buyer who acts fastest will acquire the package and be most successful.
The edge in speed will go to the client-buyer able to evaluate the potential
share package in the fewest number of steps. At the same time, the choice
process for identifying the most valuable share package is definitely affected
by certain additional Financial constraints, which essentially influence the
number of steps-requests to the portfolio data base. So, in a "zeitnot"
\ market, a client buyer ought to be charged a progressive amount of money
(fee) when using the request procedure subject to the portfolio data base for
each share package considered and then returned to the portfolio share
package. If, at last, the client-buyer stops at some potentially most valuable
share package (from his or her point of view) and buys it, the bank, as a
financial promotional-active institution, reimburses some money (gift) for the
successful commercial operation, thereby stimulating clients to engage in
active cooperation with the bank. }

{The competing stock market model within a bank portfolio medium under the
"zeitnot" \ market scheme delineated above, which governs the relationships
among clients-buyers, represents a fairly typical situation \cite{a-6,a-7,
Y-1} in a modern financial-economical context. As the whole choice process of
the most valuable share package tends to be quite casual and unstructured, it
is natural to employ stochastic process theory in its description. More
specifically, we shall employ certain aspects of minimax optimization
strategies and stopping rules associated with stochastic processes. A major
component of our analysis will be the construction of a mathematical model
capable of accurately reflecting the most important aspects of the stock
market processes described above. Once this is obtained as it is in the
sequel, we can employ fairly standard mathematical techniques to make
predictions of market behavior and formulate optimal strategies.}

{When analyzing optimal strategies subject to a competing stock market
portfolio model within a bank medium, an important problem arises for the
"zeitnot" \ market choice problem for two and more clients-buyers of share
packages, parameterized by a certain profit function. }

{In the first approximation, we assume that customers do not have financial
restrictions and have sufficient capital to purchase any stake. Under this
condition, when there are several variants of the profit function distributed
independently within the portfolio, its analysis is important for modeling the
optimal behavior of the clients-buyers, and thus, for ensuring stability of
the financial-economic processes. }

{In the proposed investigation we develop a method of using an associated
Markov-processes for the construction of the optimal strategy for the behavior
of two clients-buyers from the competing portfolio model described above. The
model is assumed to have either a given distributed mono- or bivariant profit
function governing \cite{Y-1,Y-Ar,Y-12,Y-13} the share packages in the bank
environment. }

\section{{Elements of Optimal Stops Theory}}

{Let }$(\Omega,\mathcal{F},P)${\ be a probability space \cite{Y-8,Y-9,Y-10}
with probability measure }$P${\ on the }$\sigma${-algebra }$\mathcal{F}${\ of
subsets of }$\Omega${, and }$x_{t}:\Omega\rightarrow\mathcal{H}${\ be a Markov
process for }$t\in\mathbb{Z}_{+}${\ or }$t\in\mathbb{R}_{+}${. For simplicity
we shall suppose that }$\mathcal{H}${\ is a discrete or finite-dimensional
topological space. Let us assume that on the space }$\mathcal{H}${\ there are
defined two functions: }$f:\mathcal{H}\rightarrow\mathbb{R}${, which is
interpreted as a revenue in the case that the process stops, and
}$c:\mathcal{H}\rightarrow\mathbb{R}${, which is interpreted as a fee for the
next process monitoring. Thus, if an agent monitors the trajectory of the
process }$x_{t}:\Omega\rightarrow\mathcal{H}${\ at the moments }$t=0,\ldots
,n${, and when }$\tau=n\in\mathbb{Z}_{+}${\ decides to stop monitoring, then
the revenue is the following: }%
\begin{equation}
f(x_{n})-\sum_{j=0}^{n-1}c(x_{j}). \label{Y1}%
\end{equation}
{Since the value (\ref{Y1}) is random, we should consider its mathematical
expectation and find a time }$\tau:=\tau^{\ast}\in\mathbb{Z}_{+}${\ when the
following equality }%
\begin{equation}
\tau^{\ast}=\arg\sup_{\tau}\{f(x_{\tau})-\sum_{j=0}^{\tau-1}c(x_{j})\}
\label{Y2}%
\end{equation}
{holds. Let us describe the concept of the \emph{stop moment}\ }$\tau
\in\mathbb{Z}_{+}${, which is a \emph{strategy}\ \cite{Y-1,Y-9,Y-13} of the
monitoring agent. In order to do this let us define a nondecreasing sequence
of the }$\sigma${-algebras }$\mathcal{F}_{n}:=\sigma\{x_{1},x_{2}%
,...,x_{n}\},n\in\mathbb{Z}_{+}${\ on the probability space }$(\Omega
,\mathcal{F},P)${, where }$\mathcal{F}_{n}${\ is a minimal }$\sigma${-algebra
containing all possible sets of the form }$\{\omega:x_{i}(\omega)\in B,0\leq
i\leq n\}${, where }$B\subset\mathcal{H}${\ is an arbitrary Borel set. }

\begin{definition}
{\label{Df_Y1.1} The Markov moment is a random quantity }$\tau=\tau(\omega)${
whose value lies in the set }$\mathbb{Z}_{+}$,{\ and for any }$n\in
\mathbb{Z}_{+}${\ }%
\begin{equation}
\{\omega:\tau(\omega)=n\}\in\mathcal{\mathcal{F}}_{n}. \label{Y3}%
\end{equation}

\end{definition}

{This condition means that decision about the end of monitoring at the moment
of time }$n\in\mathbb{Z}_{+}${\ is based only on the results of the monitoring
}$\{x_{1},x_{2},...,x_{n}\}${\ of the Markov process up to the moment }%
$n\in\mathbb{Z}_{+}${\ inclusive. }

\begin{definition}
{\label{Df_Y1.2} The Markov moment }$\tau\in\mathbb{Z}_{+}${ for which the
condition }$P\{\tau<\infty\}=1${\ is satisfied, or when the event }%
$\{\omega\in\Omega:t<\tau(\omega)\}\subset\mathcal{F}_{t}${\ for all }%
$t\in\mathbb{Z}_{+}${, is referred to as a \emph{stop moment}. }
\end{definition}

\begin{definition}
{\label{Df_Y1.3} The value }%
\begin{equation}
V(x_{0}):=\sup_{\tau}E_{x_{0}}\{f(x_{0})-\sum_{j=0}^{\tau-1}c(x_{j})\}
\label{Y4}%
\end{equation}
{is called the \emph{price} of the optimal stop problem. }
\end{definition}

{Let us consider the situation when the fee function }$c:\mathcal{H}%
\rightarrow\mathbb{R}${\ of the Markov process monitoring }$x_{t}%
:\Omega\rightarrow\mathcal{H}${\ with the transformation matrix }%
$\mathcal{P}:=\{p_{ij}:0\leq i,j\leq N\}${\ is zero, where we took into
account that }$card\,\mathcal{H}=N+1${. For convenience we shall suppose that
the function }$f:\mathcal{H}\rightarrow\mathbb{R}${\ is non-negative on
}$\mathcal{H}${. In order to simplify the analysis of the value (\ref{Y4}),
let us also introduce the so called }\emph{revaluation coefficient\ }%
$\alpha\in(0,1]${, which includes the cost of monitoring changes in time.
Then, if the agent-observer uses the Markov moment }$\tau\in\mathbb{Z}_{+}%
${\ as a strategy, the price of the optimal stop is }%
\begin{equation}
V(x_{0}):=\sup_{\tau}E_{x_{0}}\{\alpha^{\tau}f(x_{\tau})\}, \label{Y5}%
\end{equation}
{since the function }$c=0${. Now we define the operation }%
\begin{equation}
(\mathcal{P}f)_{i}:=\sum_{j=0}^{N}p_{ij}f(j), \label{Y6}%
\end{equation}
{for }$i=0,\ldots,N${, where by definition }$f(j):=f(x_{n_{j}}=j)${\ for the
some }$n_{j}\in\mathbb{Z}_{+},j\in\mathcal{H}\simeq\{0,1,...,N\}${. }

{The next definitions are useful for understanding material to be introduced
in the sequel:}

\begin{definition}
{\label{Df_Y1.4} The function }$g:\mathcal{H}\rightarrow\mathbb{R}_{+}${\ is
called an \emph{excessive function}, if }%
\begin{equation}
g(x)\geq\alpha(\mathcal{P}g)(x) \label{Y7}%
\end{equation}
{for all }$x\in\mathcal{H}${. }
\end{definition}

\begin{definition}
{\label{Df_Y1.5} An excessive function }$g:\mathcal{H}\rightarrow
\mathbb{R}_{+}${\ is called an \emph{excessive majorant} of the function
}$f:\mathcal{H}\rightarrow\mathbb{R}_{+}${, if }%
\begin{equation}
g(x)\geq f(x) \label{Y8}%
\end{equation}
{for all }$x\in\mathcal{H}${. The following lemma \cite{Y-1,Y-13} holds. }
\end{definition}

\begin{lemma}
{\label{Lm_Y1.6} If }$g:\mathcal{H}\rightarrow\mathbb{R}_{+}${\ is excessive
function and }$\tau\in\mathbb{Z}_{+}${\ is Markov moment, then for }$\alpha
\in(0,1]${\ }%
\begin{equation}
g(x)\geq\alpha(\mathcal{P}g)(x) \label{Y9}%
\end{equation}
{for all }$x\in\mathcal{H}${. }
\end{lemma}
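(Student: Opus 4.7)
The inequality (\ref{Y9}) to be established is, symbol for symbol, the same as the defining inequality (\ref{Y7}) of an excessive function in Definition \ref{Df_Y1.4}. So my plan is a single step: invoke the excessiveness hypothesis on $g$, which by Definition \ref{Df_Y1.4} immediately yields $g(x)\geq\alpha(\mathcal{P}g)(x)$ for every $x\in\mathcal{H}$, and this is exactly the conclusion. No stochastic argument, no induction, and no use of the $\sigma$-algebra filtration $\mathcal{F}_{n}$ is required.

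In writing this up I would explicitly note that the listed hypothesis ``$\tau\in\mathbb{Z}_{+}$ is a Markov moment'' is never used: the symbol $\tau$ does not appear on either side of (\ref{Y9}), and the defining inequality (\ref{Y7}) already gives the bound pointwise in $x$, uniformly in any choice of stopping strategy. Likewise, the constraint $\alpha\in(0,1]$ is the same range of the revaluation coefficient already fixed in Definition \ref{Df_Y1.4}, so it enters only through the excessiveness of $g$ itself and requires no separate verification.

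The only genuine ``hard part'' is interpretive rather than mathematical. The explicit inclusion of $\tau$ in the hypothesis strongly suggests that the authors in fact intended the stopped/iterated inequality $g(x)\geq E_{x}\{\alpha^{\tau}g(x_{\tau})\}$, which would be a nontrivial statement---provable by induction on the deterministic case $\tau=n$, applying (\ref{Y7}) at each step, combined with the tower property $E_{x}\{g(x_{n+1})\mid\mathcal{F}_{n}\}=(\mathcal{P}g)(x_{n})$, and then extended to an arbitrary Markov moment via the decomposition $\{\tau=n\}\in\mathcal{F}_{n}$. I would flag this likely typographical slip between (\ref{Y7}) and (\ref{Y9}) when submitting the proof; but taken at face value, the statement as worded demands nothing beyond quoting Definition \ref{Df_Y1.4}.
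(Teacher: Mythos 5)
Your reading is correct: as printed, the conclusion (\ref{Y9}) is character-for-character the defining inequality (\ref{Y7}) of an excessive function, so the literal statement follows by quoting Definition \ref{Df_Y1.4}, and your one-step argument is valid for that reading. Your diagnosis of a typographical slip is also confirmed by the paper itself: the proof of Theorem \ref{Tm_Y1.7} invokes this lemma in the form $g(x)\geq E_{x}\{\alpha^{\tau}g(x_{\tau})\}$, and the paper's own proof of the lemma is in substance a proof of that stopped inequality (its displays (\ref{Y10}) and (\ref{Y16}) carry the same typo as (\ref{Y9})). Where you and the paper genuinely differ is in how the intended inequality is reached. Your sketch is the standard supermartingale route: establish $g(x)\geq E_{x}\{\alpha^{n}g(x_{n})\}$ by induction on deterministic $n$ via the Markov property $E_{x}\{g(x_{n+1})\mid\mathcal{F}_{n}\}=(\mathcal{P}g)(x_{n})$ together with the pointwise bound (\ref{Y7}), then pass to a general Markov moment through the decomposition over the events $\{\tau=n\}\in\mathcal{F}_{n}$; on the finite state space with $g\geq 0$ this works uniformly in $\alpha\in(0,1]$. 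The paper instead uses a potential-theoretic representation: it sets $h:=g-\alpha\mathcal{P}g\geq 0$, iterates $g=h+\alpha\mathcal{P}g$ to obtain $g=\sum_{n\geq 0}\alpha^{n}\mathcal{P}^{n}h$ for $\alpha\in(0,1)$, rewrites this as $g(x)=E_{x}\{\sum_{n\geq 0}\alpha^{n}h(x_{n})\}$, computes $E_{x}\{\alpha^{\tau}g(x_{\tau})\}=E_{x}\{\sum_{n\geq 0}\alpha^{\tau+n}h(x_{\tau+n})\}$ by the strong Markov property, and observes that the second series is a tail of the first with nonnegative terms; the case $\alpha=1$ then needs a separate limiting step $\alpha\rightarrow 1$. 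Both routes are sound: yours is more elementary and handles $\alpha=1$ directly, while the paper's yields as a by-product the explicit representation of $g$ as the $\alpha$-potential of the nonnegative function $h$, which anticipates the later construction (\ref{Y30})--(\ref{Y31}). So the proposal is acceptable as a proof of the statement as worded, and your flagged reinterpretation plus your sketched induction would give a complete, and arguably cleaner, proof of what the lemma is actually used for.
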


\begin{proof}
{Let }$h:=g-\alpha\mathcal{P}g${\ and }$\alpha\in(0,1)${. Then, obviously,
}$h(x)\geq0${\ for all }$x\in\mathcal{H}${, since the conclusion that follows
from the condition (\ref{Y7}) is that }%
\begin{equation}
g(x)\geq\alpha(\mathcal{P}g)(x). \label{Y10}%
\end{equation}
{Rewriting the expression }$h:=g-\alpha\mathcal{P}g${\ in the form }%
\begin{equation}
g:=h+\alpha\mathcal{P}g, \label{Y11}%
\end{equation}
{we readily compute that }%
\begin{equation}
g:=h+\alpha\mathcal{P}h+\alpha^{2}\mathcal{P}^{2}h+...+\alpha^{n}%
\mathcal{P}^{n}h+..., \label{Y12}%
\end{equation}
$n\in\mathbb{Z}_{+}${, and expansion (\ref{Y12}) is convergent under the
condition that }$\alpha\in(0,1)${. Besides, since the mathematical expectation
}%
\begin{equation}
E_{i}\{h(x_{n})\}=\sum_{j=0}^{N}\bar{p}_{ij}^{(n)}h(j), \label{Y13}%
\end{equation}
{where }$\mathcal{P}^{n}:=\{\bar{p}_{ij}^{(n)}:0\leq i,j\leq N\},${\ the
expression (\ref{Y11}) can be rewritten as }%
\begin{equation}
g(x)=E_{x}\{\sum_{n=0}^{\infty}\alpha^{n}h(x_{n})\}. \label{Y14}%
\end{equation}
{Now the mathematical expectation can be calculated as }%
\begin{equation}
E_{x}\{\alpha^{\tau}g(x_{\tau})\}=E_{x}\{\alpha^{\tau}E_{x_{\tau}}\{\sum
_{n=0}^{\infty}\alpha^{n}h(x_{n})\}\}=E_{x}\{\sum_{n=0}^{\infty}\alpha
^{\tau+n}h(x_{\tau+n})\}. \label{Y15}%
\end{equation}
{Now comparing (\ref{Y15}) and (\ref{Y14}), we conclude that }%
\begin{equation}
g(x)\geq\alpha(\mathcal{P}g)(x) \label{Y16}%
\end{equation}
{for all }$x\in\mathcal{H},$ \ $\alpha\in(0,1)${. Letting }$\alpha
\rightarrow1${\ in (\ref{Y16}), it is easy to show that inequality (\ref{Y9})
holds for }$\alpha=1${, which completes the proof. }
\end{proof}

{The next theorem \cite{Y-13} plays a crucial role in our further analysis. }

\begin{theorem}
{\label{Tm_Y1.7} The profit function (\ref{Y5}) is the smallest excessive
majorant of the function }$f:\mathcal{H}\rightarrow\mathbb{R}_{+}${. }
\end{theorem}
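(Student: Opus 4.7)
The plan is to verify the two defining properties of the ``smallest excessive majorant'' separately: first that $V$ itself majorizes $f$ and is excessive, and second that any competing excessive majorant must dominate $V$ pointwise.

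For the majorization $V(x)\geq f(x)$, I would simply specialize the supremum in \eqref{Y5} to the deterministic stopping rule $\tau\equiv 0$, which is trivially a Markov moment with $\{\omega:\tau(\omega)=0\}=\Omega\in\mathcal{F}_0$. The corresponding expected payoff is $E_{x}\{\alpha^{0}f(x_{0})\}=f(x)$, so $V(x)\geq f(x)$ for every $x\in\mathcal{H}$. For excessivity, i.e. $V(x)\geq\alpha(\mathcal{P}V)(x)$, I would build comparison strategies of the form ``observe $x_{1}$, then from the new state apply a near-optimal stopping rule.'' By the strong Markov property and a standard $\varepsilon$-optimal selection (pointwise over the finitely many values $j\in\mathcal{H}$, which is unproblematic since $\mathcal{H}$ is discrete), one constructs for each $\varepsilon>0$ a Markov moment whose expected discounted payoff from $x_0=x$ is at least $\alpha\sum_{j=0}^{N}p_{xj}V(j)-\varepsilon=\alpha(\mathcal{P}V)(x)-\varepsilon$. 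Taking the supremum over such compound strategies, and then letting $\varepsilon\downarrow 0$, yields $V(x)\geq\alpha(\mathcal{P}V)(x)$.

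For the minimality, let $g:\mathcal{H}\to\mathbb{R}_{+}$ be any excessive majorant of $f$. I would use the iterated expansion already performed in the proof of Lemma \ref{Lm_Y1.6} to get, for any $n\in\mathbb{Z}_+$, the inequality $g(x)\geq E_{x}\{\alpha^{n}g(x_{n})\}$, and then extend it to an arbitrary Markov moment $\tau$ by the optional sampling argument (exactly the computation carried out in \eqref{Y15}): conditioning on $\mathcal{F}_\tau$ and using that $g$ is excessive under $\mathcal{P}$ gives $g(x)\geq E_{x}\{\alpha^{\tau}g(x_{\tau})\}$. Since $g\geq f\geq 0$, this in turn implies $g(x)\geq E_{x}\{\alpha^{\tau}f(x_{\tau})\}$ for every Markov moment $\tau$. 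Taking the supremum over $\tau$ on the right-hand side yields $g(x)\geq V(x)$ for all $x\in\mathcal{H}$, which is precisely the minimality claim.

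The main obstacle is the excessivity step $V\geq\alpha\mathcal{P}V$: one must justify, with a little care, that the ``observe one step and then continue optimally'' construction really produces a $\mathcal{F}_n$-adapted Markov moment in the sense of Definition \ref{Df_Y1.1}. In the present discrete (finite) setting this reduces to choosing, for each state $j\in\mathcal{H}=\{0,1,\ldots,N\}$, an $\varepsilon$-optimal Markov moment $\tau_j$ and then defining the compound rule $\tau:=1+\tau_{x_1}\circ\theta_{1}$, where $\theta_1$ is the one-step shift; measurability follows because $\mathcal{H}$ is finite. Once this technical point is cleared, the two inequalities together identify $V$ as the smallest excessive majorant of $f$, completing the proof.
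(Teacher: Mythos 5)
Your proposal is correct and follows essentially the same route as the paper: the excessivity of $V$ is obtained by comparing against the compound strategy ``take one step, then continue with an $\varepsilon$-optimal Markov moment'' and letting $\varepsilon\downarrow 0$ (this is exactly the computation in \eqref{Y17}--\eqref{Y19}), while minimality follows from Lemma \ref{Lm_Y1.6} via $g(x)\geq E_{x}\{\alpha^{\tau}g(x_{\tau})\}\geq E_{x}\{\alpha^{\tau}f(x_{\tau})\}$ and taking the supremum over $\tau$, as in \eqref{Y20}. Your explicit treatment of the measurability of the compound rule $\tau=1+\tau_{x_1}\circ\theta_1$ and of the trivial majorization step $\tau\equiv 0$ only makes more precise what the paper leaves implicit.
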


{\medskip}

\begin{proof}
Observe {that it follows from (\ref{Y5}) that the price }$V:\mathcal{H}%
\rightarrow\mathbb{R}_{+}${\ is the excessive majorant of the function
}$f:\mathcal{H}\rightarrow\mathbb{R}_{+}${. Indeed, since }$V(x)=\sup_{\tau
}E_{x}\{\alpha^{\tau}f(x_{\tau})\},${\ the Markov moment }$\tau_{\epsilon}%
\in\mathbb{Z}_{+}${\ exists for any }$\epsilon>0${, and }%
\begin{equation}
E_{x}\{\alpha^{\tau_{\epsilon}}f(x_{\tau_{\epsilon}}\}>V(x)-\epsilon,
\label{Y17}%
\end{equation}
{where the value }$x\in\mathcal{H}${\ is fixed. Since }$card\ \mathcal{H}%
=N+1${\ is a finite quantity, the inequality (\ref{Y17}) also holds for all
}$x\in\mathcal{H}${. Let us calculate the mathematical expectation }%
\begin{equation}
E_{x}\{\alpha^{\tau^{\prime}}f(\tau^{\prime})\}=\sum_{j=1}^{N}p_{x,j}%
E_{j}\{\alpha^{1+\tau_{\epsilon}}f(x_{\tau})\}\geq\alpha\sum_{j=1}^{N}%
p_{x,j}V(j)-\alpha\epsilon, \label{Y18}%
\end{equation}
{where }$\tau^{\prime}:=1+\tau_{\epsilon}\in\mathbb{Z}_{+}${. From the
inequality (\ref{Y18}) we infer that }%
\begin{equation}
V(x)\geq E_{x}\{\alpha^{\tau^{\prime}}f(\tau^{\prime})\}\geq\alpha
(\mathcal{P}V)(x)-\alpha\epsilon\label{Y19}%
\end{equation}
{for any }$\epsilon>0\mathbf{.}${\ Calculating the limit in (\ref{Y19}) for
}$\epsilon\rightarrow0,${\ we find that }$V(x)\geq\alpha(\mathcal{P}%
V)(x)${\ for all }$x\in\mathcal{H}${, which means that the excess of the
profit function is }$V:\mathcal{H}\rightarrow\mathbb{R}_{+}.${\ } {Now Lemma
\ref{Lm_Y2.1} implies that for any Markov moment }$\tau\in\mathbb{Z}_{+}${\ we
have the following inequalities: }%
\begin{equation}
g(x)\geq E_{x}\{\alpha^{\tau}g(x_{\tau})\}\geq E_{x}\{\alpha^{\tau}f(x_{\tau
})\} \label{Y20}%
\end{equation}
{for all }$x\in\mathcal{H}${. Calculating the supremum of (\ref{Y20}) for
}$\tau\in\mathbb{Z}_{+},${\ we obtain }%
\begin{equation}
g(x)\geq\sup_{\tau}E_{x}\{\alpha^{\tau}f(x_{\tau})\}=V(x)\nonumber
\end{equation}
{for all }$x\in\mathcal{H}.${\ Thus the proof is complete. }
\end{proof}

{To calculate the choice price }$V:\mathcal{H}\rightarrow\mathbb{R}_{+}${\ we
use a criterion that can be formulated as the following theorem. }

\begin{theorem}
{\label{Tm_Y1.8} The optimal choice price }$V:\mathcal{H}\rightarrow
\mathbb{R}_{+}${\ is the least solution of the equation }%
\begin{equation}
V(x)=\max\{f(x),\alpha(\mathcal{P}V)(x)\} \label{Y21}%
\end{equation}
{for all }$x\in\mathcal{H}${\ and }$\alpha\in(0,1]${. }
\end{theorem}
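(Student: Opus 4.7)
The plan is to establish the claim in two steps: first show that $V$ satisfies the Bellman-type equation (\ref{Y21}), and then invoke Theorem \ref{Tm_Y1.7} to obtain the minimality assertion. The key advantage of this order is that once (\ref{Y21}) is in hand, minimality follows almost for free from the already-established excessive-majorant characterization.

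For the first step, the inequality $V(x)\ge f(x)$ is immediate from Definition \ref{Df_Y1.3} by choosing the trivial Markov moment $\tau\equiv 0$, which gives $E_{x}\{\alpha^{0}f(x_{0})\}=f(x)$. The inequality $V(x)\ge\alpha(\mathcal{P}V)(x)$ is exactly the excessive-majorant property already proven in Theorem \ref{Tm_Y1.7}. Hence $V(x)\ge\max\{f(x),\alpha(\mathcal{P}V)(x)\}$. To get the reverse inequality, I would split an arbitrary stopping moment $\tau$ according to the events $\{\tau=0\}$ and $\{\tau\ge 1\}$: on the first set the expected reward equals $f(x)$, while on the second one writes $\tau=1+\tau'$ for a stopping moment $\tau'$ relative to the shifted filtration and invokes the Markov property to condition on $x_{1}=j$, obtaining $E_{x}\{\alpha^{\tau}f(x_{\tau})\mathbf{1}_{\{\tau\ge 1\}}\}\le\alpha\sum_{j=0}^{N}p_{x,j}V(j)=\alpha(\mathcal{P}V)(x)$. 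Combining the two bounds and taking the supremum over $\tau$ yields $V(x)\le\max\{f(x),\alpha(\mathcal{P}V)(x)\}$, so (\ref{Y21}) holds.

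For the second step, let $W:\mathcal{H}\to\mathbb{R}_{+}$ be any nonnegative solution of (\ref{Y21}). From $W(x)=\max\{f(x),\alpha(\mathcal{P}W)(x)\}$ one immediately gets $W(x)\ge f(x)$ and $W(x)\ge\alpha(\mathcal{P}W)(x)$, so $W$ qualifies as an excessive majorant of $f$ in the sense of Definitions \ref{Df_Y1.4} and \ref{Df_Y1.5}. Theorem \ref{Tm_Y1.7} already identifies $V$ as the \emph{smallest} such excessive majorant, and therefore $V(x)\le W(x)$ for every $x\in\mathcal{H}$, which is exactly the minimality asserted in the theorem.

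The main technical obstacle I expect is the reverse dynamic-programming inequality in the first step, where one must appeal carefully to the Markov property to decompose the expectation after one step of the trajectory and to justify that the shifted stopping moment $\tau'$ is again a stopping moment with respect to the shifted filtration. Finiteness of $\mathcal{H}$ and boundedness of $f$ allow one to exchange supremum and expectation cleanly, which keeps the argument from running into integrability subtleties; once the one-step decomposition is handled, the remainder of the proof is a routine invocation of the prior results.
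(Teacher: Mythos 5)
Your proof is correct, but it takes a genuinely different route from the paper's. The paper never verifies the dynamic-programming equation for $V$ directly: it introduces the operator $Q_{\alpha}$ of (\ref{Y22}), forms the monotone value-iteration limit $\tilde{V}=\lim_{n\rightarrow\infty}(Q_{\alpha}^{n}f)$ of (\ref{Y24}), passes to the limit in $(Q_{\alpha}^{n}f)=\max\{f,\alpha\mathcal{P}Q_{\alpha}^{n-1}f\}$ to see that $\tilde{V}$ solves (\ref{Y21}), and then argues that $\tilde{V}$ is the smallest excessive majorant of $f$ (via $(Q_{\alpha}^{n}g)\geq(Q_{\alpha}^{n}f)$ and $\lim_{n}(Q_{\alpha}^{n}g)=g$ for any excessive majorant $g$), so that $\tilde{V}=V$ by Theorem \ref{Tm_Y1.7}. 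You instead establish the two inequalities for $V$ itself: the easy half from the trivial moment $\tau\equiv 0$ together with the excessivity already contained in Theorem \ref{Tm_Y1.7}, and the hard half by the first-step decomposition of an arbitrary stopping moment over $\{\tau=0\}$ and $\{\tau\geq 1\}$ combined with the Markov property; your minimality step (any nonnegative solution of (\ref{Y21}) is an excessive majorant of $f$, hence dominates $V$) coincides with the paper's closing observation. What your route buys is that it sidesteps the limit-interchange points in the iteration argument (justifying $\lim_{n}\alpha(\mathcal{P}Q_{\alpha}^{n-1}f)=\alpha(\mathcal{P}\tilde{V})$ and the claim $\lim_{n}(Q_{\alpha}^{n}g)=g$), at the price of the one-step conditioning you flag, which requires the shifted moment $\tau'$ to be a stopping moment for the shifted filtration and uses that a convex combination of $f(x)$ and $\alpha(\mathcal{P}V)(x)$ is bounded by their maximum; both points are unproblematic here since $\mathcal{H}$ is finite and $f$ is bounded. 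What the paper's route buys in exchange is a constructive successive-approximation scheme for computing $V$, which is closer in spirit to how the price functions are actually evaluated in the later sections.
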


\begin{proof}
{Using the expression (\ref{Y21}), let us define the operator $Q_{\alpha}$
acting on a function }$y:\mathcal{H}\rightarrow\mathbb{R}_{+}${\ for }%
$x\in\mathcal{H}${: }
\begin{equation}
(Q_{\alpha}y)(x):=\max\{f(x),\alpha(\mathcal{P}y)(x)\}. \label{Y22}%
\end{equation}
{It is easy to see that the following inequalities }%
\begin{equation}
y(x)\leq(Q_{\alpha}y)(x)\leq...\leq(Q_{\alpha}^{n}y)(x) \label{Y23}%
\end{equation}
{hold for all }$x\in\mathcal{H}.${ We now consider the following expression }%
\begin{equation}
\tilde{V}(x):=\lim_{n\rightarrow\infty}(Q_{\alpha}^{n}f)(x) \label{Y24}%
\end{equation}
{and show that the function (\ref{Y24}) satisfies the equation (\ref{Y21}).
Indeed, since }%
\begin{equation}
(Q_{\alpha}^{n}f)(x)=\max\{f(x),\alpha(\mathcal{P}Q_{\alpha}^{n-1}f(x)\}
\label{Y25}%
\end{equation}
{for all }$n\in\mathbb{Z}_{+}${, taking the limit as }$n\rightarrow\infty
${\ yields the result (\ref{Y21}). As every solution of the equation
(\ref{Y21}) is an excessive majorant, the solution of (\ref{Y24}) is the same
function. It remains only to show that this function is the smallest excessive
minorant of the function }$f:\mathcal{H}\rightarrow\mathbb{R}_{+}${. Indeed,
owing to the inequality (\ref{Y21}) and the definition of }$Q_{\alpha}%
-${\ operation, it is easy to show that for all }$x\in\mathcal{H}$,{\ }%
\begin{equation}
\lim_{n\rightarrow\infty}(Q_{\alpha}^{n}g)(x)=g(x). \label{Y26}%
\end{equation}
{Since }$g(x)\geq f(x)${\ for }$x\in\mathcal{H}${, it follows that
}$(Q_{\alpha}^{n}g)(x)\geq(Q_{\alpha}^{n}f)(x)${\ too for all }$n\in
\mathbb{Z}_{+}${. Taking the limit of the last inequality as }$n\rightarrow
\infty${, we obtain }%
\begin{equation}
g(x)\geq\lim_{n\rightarrow\infty}(Q_{\alpha}^{n}f)(x)=\tilde{V}(x)
\end{equation}
{for all }$x\in\mathcal{H}${, and this means that the solution (\ref{Y24}) is
the smallest excessive minorant of the function }$f:\mathcal{H}\rightarrow
\mathbb{R}_{+}\mathbf{.}${\ Thus, }$V(x)=\tilde{V}(x),x\in\mathcal{H},${ so
the proof is complete. }
\end{proof}

{In virtue of the properties of the solution (\ref{Y24}) of the problem
(\ref{Y21}), the validity of the following theorem is demonstrated in
\cite{Y-9,Y-10,Y-13}. }

\begin{theorem}
{\label{Tm_Y1.9} The Markov moment }$\tau^{\ast}\in\mathcal{H},${\ defined by
the condition (\ref{Y2}) as the moment of the first hit of the process }%
$x_{t}:\Omega\rightarrow\mathcal{H},t\in\mathbb{Z}_{+},${\ into the set
}$\Gamma_{+}:=\{x\in\mathcal{H}:V(x)=f(x)\}\mathbf{,}${\ is optimal. Besides,
}$\lim_{n\rightarrow\infty}P_{x}\{\tau^{\ast}>n\}=0,${\ as well as }%
$E_{x}\{\alpha^{\tau_{n}}V(x_{\tau_{n}})\}=V(x),${\ where }$\tau_{n}%
:=\min(\tau^{\ast},n),n\in\mathbb{Z}_{+}${, for all }$x\in\mathcal{H}.${\ }
\end{theorem}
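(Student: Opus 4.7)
The plan is to combine the Bellman equation (\ref{Y21}) with a martingale argument to deduce optimality of the first-hit moment $\tau^{\ast}$ of the set $\Gamma_{+}$. Denote the continuation region $C := \mathcal{H} \setminus \Gamma_{+}$. By the definition of $\Gamma_{+}$, on $C$ one has $V(x) > f(x)$, so (\ref{Y21}) collapses to $V(x) = \alpha(\mathcal{P}V)(x)$, whereas $V(x) = f(x)$ on $\Gamma_{+}$; these two identities are the backbone of the argument. The first task is then to verify $E_{x}\{\alpha^{\tau_{n}} V(x_{\tau_{n}})\} = V(x)$. By the Markov property, on the event $\{\tau^{\ast} > k\}$ the state $x_{k}$ lies in $C$, so $E\{V(x_{k+1}) \mid \mathcal{F}_{k}\} = (\mathcal{P}V)(x_{k}) = \alpha^{-1}V(x_{k})$ there; hence $M_{k} := \alpha^{k \wedge \tau^{\ast}} V(x_{k \wedge \tau^{\ast}})$ is a martingale. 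Since $V$ is bounded ($\mathrm{card}\,\mathcal{H} < \infty$) and $\tau_{n}$ is a bounded Markov moment, Doob's optional stopping theorem delivers the asserted identity.

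The second task is to establish $\lim_{n \rightarrow \infty} P_{x}\{\tau^{\ast} > n\} = 0$. Splitting the identity of the first step according to $\{\tau^{\ast} \leq n\}$ versus $\{\tau^{\ast} > n\}$ yields
\begin{equation}
V(x) = E_{x}\{\alpha^{\tau^{\ast}} V(x_{\tau^{\ast}}) \mathbf{1}_{\{\tau^{\ast} \leq n\}}\} + E_{x}\{\alpha^{n} V(x_{n}) \mathbf{1}_{\{\tau^{\ast} > n\}}\}.
\end{equation}
For $\alpha \in (0,1)$ the second summand is bounded above by $\alpha^{n}\max_{y}V(y)$ and hence vanishes geometrically. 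Promoting this into an unweighted tail bound on $\tau^{\ast}$ -- and simultaneously handling the boundary case $\alpha = 1$ -- requires an appeal to the minimality clause of Theorem \ref{Tm_Y1.7}: if some subset of $C$ were absorbing and $\Gamma_{+}$ were unreachable from it, then $V$ would be $\alpha$-harmonic on that subset and could be lowered there to produce a strictly smaller excessive majorant of $f$, contradicting the smallness of $V$. Combined with the finiteness of $\mathcal{H}$, this excludes absorbing pockets inside $C$ and forces $P_{x}\{\tau^{\ast} > n\} \rightarrow 0$.

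Finally, combining the previous two steps and using $V(x_{\tau^{\ast}}) = f(x_{\tau^{\ast}})$ on $\Gamma_{+}$, dominated convergence in the displayed decomposition delivers $V(x) = E_{x}\{\alpha^{\tau^{\ast}} f(x_{\tau^{\ast}})\}$; since by (\ref{Y5}) this quantity is the supremum over all Markov moments, $\tau^{\ast}$ attains the supremum and is therefore optimal. I expect the principal obstacle to be the tail estimate of the second step -- specifically, promoting the geometric control of $E_{x}\{\alpha^{n}V(x_{n})\mathbf{1}_{\{\tau^{\ast}>n\}}\}$ into an unweighted bound on $P_{x}\{\tau^{\ast}>n\}$ and handling the undiscounted regime $\alpha = 1$ where no geometric decay is available -- which relies on the somewhat delicate minimality argument sketched above. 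The remaining pieces reduce to routine applications of the Bellman equation and optional stopping for bounded martingales on a finite state space.
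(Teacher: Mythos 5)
The paper itself offers no proof of Theorem \ref{Tm_Y1.9}: it merely states that the result follows from the properties of the solution (\ref{Y24}) of equation (\ref{Y21}) and is demonstrated in the cited references, so there is no internal argument to compare yours against. What you propose is the standard textbook proof, and it is essentially sound. The split of $\mathcal{H}$ into $\Gamma_{+}$ and the continuation region $C$, the identity $V=\alpha(\mathcal{P}V)$ on $C$ extracted from (\ref{Y21}), the stopped process $M_{k}=\alpha^{k\wedge\tau^{\ast}}V(x_{k\wedge\tau^{\ast}})$ being a bounded martingale, and optional stopping at the bounded moment $\tau_{n}$ correctly deliver $E_{x}\{\alpha^{\tau_{n}}V(x_{\tau_{n}})\}=V(x)$; the final limit passage, once the tail estimate is in hand, gives optimality. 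The one soft spot is exactly where you flagged it: excluding closed (absorbing) subsets $D$ of $C$. Your perturbation $V-\epsilon\mathbf{1}_{D}$ is a genuine excessive majorant only in the undiscounted case $\alpha=1$: on a closed $D$ one has $(\mathcal{P}\mathbf{1}_{D})(x)=1$ for $x\in D$, so the required inequality $V(x)-\epsilon\geq\alpha(\mathcal{P}V)(x)-\alpha\epsilon$ reduces to $-\epsilon\geq-\alpha\epsilon$, which forces $\alpha\geq1$. For $\alpha\in(0,1)$ the lowering argument as stated fails, but a more direct repair is available: on a closed $D\subset C$ one has $V=\alpha\mathcal{P}V$, hence $V=\alpha^{m}\mathcal{P}^{m}V\rightarrow0$ on $D$, so $0\leq f\leq V=0$ there, meaning $D\subset\Gamma_{+}$, a contradiction. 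With that patch both regimes are covered, and since $\mathcal{H}$ is finite any positive-probability event $\{\tau^{\ast}=\infty\}$ would force the chain into a closed class inside $C$, so $P_{x}\{\tau^{\ast}>n\}\rightarrow0$ follows. In short: correct approach, standard route, one genuinely delicate step that needs the case distinction in $\alpha$ made explicit rather than a single perturbation argument.
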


{Let us consider the case when the fee mapping }$c:\mathcal{H}\rightarrow
\mathbb{R}_{+}${\ is nonzero. Then the price function has the following form }%
\begin{equation}
V_{\alpha}(x):=\sup_{\tau}E_{x}\{\alpha^{\tau}f(x_{\tau})-\sum_{i=0}^{\tau
-1}\alpha^{i}c(x_{i})\}, \label{Y28}%
\end{equation}
{where }$x\in\mathcal{H},${\ }$\alpha\in(0,1].${\ Analogously to Definition
\ref{Df_Y1.4}, we can formulate the following \cite{Y-13} definition. }

\begin{definition}
{\label{Df_Y1.10} The function }$g:\mathcal{H}\rightarrow\mathbb{R}_{+}${\ is
called an excessive function, if }%
\begin{equation}
g(x)\geq\alpha(\mathcal{P}g)(x)-c(x) \label{Y29}%
\end{equation}
{for all }$x\in\mathcal{H}.${\ }
\end{definition}

{We define }%
\begin{equation}
f_{\alpha}(x):=E_{x}\{\sum_{i=o}^{\infty}\alpha^{i}c(x_{i})\}, \label{Y30}%
\end{equation}
{where }$\alpha\in(0,1)${\ and }$x\in\mathcal{H}.${\ Whence, it is clear that
the choice price (\ref{Y28}) has the following representation: }%
\begin{equation}
V_{\alpha}(x):=\sup_{\tau}E_{x}\{\alpha^{\tau}[f(x_{\tau})+f_{\alpha}(x_{\tau
})]\}-f_{\alpha}(x) \label{Y31}%
\end{equation}
{for all }$x\in\mathcal{H}.${\ This means that the problem of the optimal stop
with the nonzero price }$c:\mathcal{H}\rightarrow\mathbb{R}_{+}${\ gives rise
to the analogous problem with the zero price for the observation; that is, to
the problem }%
\begin{equation}
\bar{V}_{\alpha}(x)=\sup_{\tau}E_{x}\{\alpha^{\tau}[f(x)+f_{\alpha}(x_{\tau
})]\}, \label{Y32}%
\end{equation}
$x\in\mathcal{H},${\ which solves the equation }%
\begin{equation}
\bar{V}_{\alpha}(x)=\max\{f(x)+f_{\alpha}(x),\alpha(\mathcal{P}\bar{V}%
_{\alpha})(x)\}. \label{Y33}%
\end{equation}
{Then the corresponding stop Markov moment is that }$\tau^{\ast}\in
\mathcal{H}_{+}${\ of the first observation in the set }$\Gamma_{+}%
:=\{x\in\mathcal{H}:\bar{V}_{\alpha}(x)=f(x)+f_{\alpha}(x)\}${. It is obvious,
that }$\Gamma_{+}=\{x\in\mathcal{H}:V_{\alpha}(x)=f(x)\}${. The result,
formulated above, is valid only for the }$\alpha\in(0,1)${. In the case when
}$\alpha=1${\ the function }$f_{\alpha}:\mathcal{H}\rightarrow\mathbb{R}_{+}%
${\ cannot be defined, and we need to find an alternative solution of the
problem (\ref{Y28}). }

{Let us set }$\alpha=1${\ and consider the powers }$\mathcal{P}^{m}${\ of the
transition probabilities matrix as }$m\rightarrow\infty\mathbf{.}${\ Then, it
follows from the ergodic theorem of A.A.~Markov \cite{Y-1,Y-8} that the
following asymptotic equality }$\mathcal{P}^{m}={\mathcal{S}}+h^{m}%
\mathcal{R}^{(m)}${\ holds, where }$|h|<1${\ and for all }$m\in\mathbb{Z}_{+}%
${\ the quantity }$\sup_{m\in\mathbb{Z}_{+}}||\mathcal{R}^{(m)}||\leq\bar
{r}<\infty,${\ and the matrix }$\mathcal{S}\in\mathcal{H}om(\mathbb{R}^{N+1}%
)${\ has exactly \ the same }$(N+1\in\mathbb{Z}_{+}{)}$ {positive vector-rows
}$q^{\intercal}\in\mathbb{R}_{+}^{N+1}${\ of the limit probabilities. Thus,
when }$\alpha=1${\ and the choice strategy price }$\tau=n\in\mathbb{Z}_{+}${,
the choice price (\ref{Y28}) is }%
\begin{align}
E_{x}\{f(x_{\tau})-\sum_{i=0}^{\tau-1}c(x_{i})\}  &  =f(x_{n})-c({x}%
)-\sum_{j=1}^{N}p_{{x,{j}}}c_{j}-...\label{Y34}\\
-\sum_{j=1}^{N}p_{x,{j}}^{(n-1)}c_{j}  &  =f(x_{n})-n\left\langle
q,c\right\rangle -\sum_{j=1}^{N}r_{x,{j}}c_{j},\nonumber
\end{align}
{where }$\left\langle .\ ,\ .\right\rangle ${\ is the ordinary scalar product
in the space }$\mathbb{R}^{N+1}${. From (\ref{Y34}) we see that when
}$\left\langle q,c\right\rangle <0$,{\ the choice price can be made
arbitrarily large while still stopping the monitoring process. If
}$\left\langle q,c\right\rangle \geq0,${\ then the situation is opposite to
the previous one, and it can be shown \cite{Y-12} that the quantity
(\ref{Y30}) for }$\alpha\in(\alpha_{0},1)${\ and some }$\alpha_{0}\in
(0,1)${\ is limited and positive. Owing to the action of the operator
(\ref{Y22}), }$Q_{\beta_{1}}(x)\leq Q_{\beta_{2}}(x)${\ for all }%
$x\in\mathcal{H}${\ and }$\beta_{1}\leq\beta_{2}\in(0,1)${\ is monotonic,
there exists a sequence }$\{\alpha_{n}\in(0,1):n\in\mathbb{Z}_{+}\}${\ such
that }$\lim_{n\rightarrow\infty}\alpha_{n}=1${\ and }$\lim_{n\rightarrow
\infty}V_{\alpha_{n}}(x^{\ast})=f(x^{\ast})${\ for some state }$x^{\ast}%
\in\mathcal{H}.${\ Under the condition }$card\ \mathcal{H}=N+1<\infty$,{\ the
limit }$\lim_{n\rightarrow\infty}V_{\alpha_{n}}(x)=f(x)${\ exists for every
}$x\in\mathcal{H}.${\ Thus, the set }%
\begin{equation}
\Gamma_{+}:=\{x\in\mathcal{\mathcal{H}}:V(x)=f(x)\} \label{Y35}%
\end{equation}
{is not empty when }$\left\langle q,c\right\rangle \geq0${, which is
tantamount to the optimality of the strategy }$\tau^{\ast}\in\mathcal{H}${\ of
the first hitting the observation into the set }$\Gamma_{+}.${\ }

{To formulate the concluding proposition for the case }$\alpha=1$,{\ we
partition the phase space }$\mathcal{H}${\ of the Markov process states
}$x_{t}:\Omega\rightarrow\mathcal{H},t\in\mathbb{Z}_{+},${\ into the subsets
of the nonessential states }$\mathcal{H}_{0}${\ and the classes }%
$\{\mathcal{H}_{i}:1\leq i\leq m_{N}\}${\ of the essential states with
nontrivial transition probabilities. Then the every essential class
}$\mathcal{H}_{i}\subset\mathcal{H},1\leq i\leq m_{N},${\ corresponds to the
vector of boundary probabilities }$q_{i}\in\mathbb{R}^{N+1}${\ and vector
}$c_{i}\in\mathbb{R}^{N+1}${, for which one can verify the following result. }

\begin{proposition}
{\label{Tm_Y1.11} If for some }$i\in\{1,\ldots,m_{N}\}${\ the quantity
}$\left\langle q_{i},c_{i}\right\rangle \geq0,${\ then the moment }$\tau
^{\ast}\in\mathcal{H}${\ of the first hit into the set }$\Gamma_{+}${\ in the
form (\ref{Y35}) is an optimal strategy. Nonessential states }$x\in
\mathcal{H}_{0},${\ for which we can find at least one set }$\mathcal{H}%
_{l}\subset\mathcal{H},${\ and }$\left\langle q_{l},c_{l}\right\rangle
<0,${\ belong to the subset }$\mathcal{H}_{0}\backslash\Gamma_{+}.${\ }
\end{proposition}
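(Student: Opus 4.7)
The plan is to combine the phase-space decomposition described in the paragraph preceding the proposition with the discounted-approximation technique $\alpha_n \to 1^{-}$ already introduced there, reducing the $\alpha=1$ case to the setting in which Theorems \ref{Tm_Y1.8} and \ref{Tm_Y1.9} apply directly. On each essential class $\mathcal{H}_i$ the chain is ergodic, so the expansion $\mathcal{P}^{m}=\mathcal{S}+h^{m}\mathcal{R}^{(m)}$ with $|h|<1$ governs the long-run behaviour of the accumulated fees, and the sign of $\langle q_i,c_i\rangle$ controls whether indefinite monitoring is profitable.

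First, I would verify that on any essential class $\mathcal{H}_i$ with $\langle q_i,c_i\rangle\ge 0$ the limit $V(x):=\lim_{n\to\infty}V_{\alpha_n}(x)$ exists and is finite for every $x\in\mathcal{H}_i$. Monotonicity of the operator $Q_\alpha$ in $\alpha$ (already observed below (Y34)) gives monotone convergence of $V_{\alpha_n}$, while the asymptotic formula (Y34) gives a uniform upper bound: the cumulative fee grows like $n\langle q_i,c_i\rangle$ plus an $O(1)$ correction depending on $\mathcal{R}^{(m)}$, so the supremum in (Y28) cannot escape to $+\infty$. Next, for each $\alpha_n\in(0,1)$ Theorem \ref{Tm_Y1.9} (applied to the reduced zero-fee problem (Y32)--(Y33)) produces an optimal strategy equal to the first hit of $\Gamma_+^{(n)}:=\{x:V_{\alpha_n}(x)=f(x)\}$. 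Because $\mathrm{card}\,\mathcal{H}=N+1<\infty$, the sequence $\Gamma_+^{(n)}$ stabilises for large $n$ to the set $\Gamma_+$ defined in (Y35).

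The third step is to establish optimality of $\tau^{\ast}$ at $\alpha=1$. I would show, by passing to the limit in the fixed-point equation for $V_{\alpha_n}$ and using Definition \ref{Df_Y1.10}, that $V$ satisfies $V(x)=\max\{f(x),(\mathcal{P}V)(x)-c(x)\}$ and is the smallest excessive majorant of $f$ in the sense of (Y29). The excessive-majorant argument of Theorem \ref{Tm_Y1.7}, combined with the identity $E_x\{V(x_{\tau_k})\}=V(x)$ for $\tau_k:=\min(\tau^{\ast},k)$ from Theorem \ref{Tm_Y1.9}, then gives $V(x)\ge E_x\{f(x_{\tau^{\ast}})-\sum_{i=0}^{\tau^{\ast}-1}c(x_i)\}$ and the reverse inequality, yielding equality. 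Since inside $\mathcal{H}_i$ the ergodicity forces $P_x(\tau^{\ast}<\infty)=1$, the first-hit strategy is well defined.

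For the second claim, I would exploit the opposite sign case noted immediately after (Y34): if a nonessential state $x\in\mathcal{H}_0$ communicates with an essential class $\mathcal{H}_l$ where $\langle q_l,c_l\rangle<0$, then by conditioning on reaching $\mathcal{H}_l$ and continuing the monitoring for $n$ further steps, the expected reward in (Y28) contains a term $-n\langle q_l,c_l\rangle$ which is positive and unbounded in $n$. Hence $V(x)=+\infty>f(x)$ and consequently $x\notin\Gamma_+$, placing $x$ in $\mathcal{H}_0\setminus\Gamma_+$ as claimed. The principal obstacle, I expect, will be the interchange of the supremum over stopping times with the limit $\alpha_n\to 1$ in a way that legitimately transfers the optimality certification of Theorem \ref{Tm_Y1.9} from each $\alpha_n<1$ to the boundary value $\alpha=1$; this requires the remainder bound $\sup_m\|\mathcal{R}^{(m)}\|\le\bar r<\infty$ on each essential class and a careful uniform estimate on the pre-hitting distribution of $\tau^{\ast}$ to rule out mass leaking to infinity.
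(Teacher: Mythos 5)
Your argument follows the same route the paper itself takes: the proposition is justified there only by the discussion immediately preceding it ("one can verify the following result"), which uses exactly your ingredients --- the monotone discounted approximation $\alpha_n\to 1^-$ via the operator $Q_{\alpha}$, the ergodic expansion $\mathcal{P}^{m}=\mathcal{S}+h^{m}\mathcal{R}^{(m)}$ showing the cumulative fee behaves like $n\langle q_i,c_i\rangle$ plus a bounded remainder, finiteness of $\mathcal{H}$ to extract the limit price and the nonempty set $\Gamma_{+}$ of (\ref{Y35}), and the unboundedness of the price when $\langle q_l,c_l\rangle<0$ to exclude the corresponding nonessential states from $\Gamma_{+}$. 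Your write-up is, if anything, more explicit than the paper's about the delicate limit interchange and the stabilization of the stopping sets, but the underlying approach is identical.
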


{In the next section we shall consider the problem of the optimal choice of
the competing portfolio model of the share market with a mono-variant profit
function, where the price function is defined by a constructive method
together with the associated Markov process. }

\section{{Mathematical Market Model with a Monovariant Profit Function}}

We begin with a constructive formulation of the model.

\subsection{{Model description}}

{Let }$(\Omega,\mathcal{F},P)${\ define \cite{a-2, Y-1} a probability space,
where }$\Omega${\ is the set of the elementary events with a selected }%
$\sigma${-algebra }$\mathcal{F}${\ of its subsets with probability measure
}$P${, defined on the subsets of }$\mathcal{F}${. Suppose that on the space
}$\Omega${\ there is a discrete Markov \cite{a-8, Y-10} process }%
$x:\mathbb{Z}_{+}\times\Omega\rightarrow\mathcal{H}${\ with the values in some
topological space }$\mathcal{H}${. For all }$t\in\mathbb{Z}_{+}${\ the
quantity }$x_{t}(\omega)\in\mathcal{H}${\ is random, and the set }%
$\{x_{t}(\omega)\in\mathcal{H}:\ t\in\mathbb{Z}_{+}\}${\ forms the virtual
trajectory of the possible states of the process. }

{We suppose that there exists an increasing family of the }$\sigma${-algebras
}$\{\mathcal{F}_{t}\subset\mathcal{F}:t\in\mathbb{Z}_{+}\}${\ such that }%
\begin{equation}
\mathcal{\mathcal{F}}_{s}\subset\mathcal{\mathcal{F}}_{t}\subset
\mathcal{\mathcal{F}} \label{e2.1}%
\end{equation}
{for all }$t>s\in\mathbb{Z}_{+}${. Then the process }$x:\mathbb{Z}_{+}%
\times\Omega\rightarrow\mathcal{H}${\ is called the \emph{adapted process} to
the family }$\{\mathcal{F}_{t}\subset\mathcal{F}:t\in\mathbb{Z}_{+}\}${, if
the mapping }$x_{t}:\Omega\rightarrow\mathcal{H}${\ is }$\mathcal{F}_{t}%
${-measurable for every }$t\in\mathbb{Z}_{+}${. For the process }%
$x:\mathbb{Z}_{+}\times\Omega\rightarrow\mathcal{H}${\ we introduce the
important definition of the Markov stop moment \cite{Y-9,Y-10}. It is a
mapping }$\tau:\Omega\rightarrow\mathbb{Z}_{+}${ such that the event
}$\{\omega\in\Omega:t<\tau(\omega)\}\subset\mathcal{F}_{t}${\ for all }%
$t\in\mathbb{Z}_{+}${. }

{Now consider an arbitrary mapping }$f:\mathcal{H}\rightarrow\mathbb{R}${, and
find the mathematical expectation \cite{a-2,Y-10} of the process }%
$f(x_{t}):\Omega\rightarrow\mathbb{R}${\ regarding the }$\sigma${-algebra
}$\mathcal{F}_{s}\subset\mathcal{F},${\ which we denote as }$E_{s}%
(f(x_{t})):=E\{(f(x_{t})|\mathcal{F}_{s}\},t>s\in\mathbb{Z}_{+}${. Then, by
definition, }%
\begin{equation}
\int_{A\in\mathcal{F}_{s}}E_{s}(f(x_{t}))dP_{s}:=\int_{A\in
\mathcal{\mathcal{F}}_{s}\subset\mathcal{\mathcal{F}}}f(x_{t})dP \label{e2.2}%
\end{equation}
{for all subsets }$A\in\mathcal{F}_{s}${, where the measure }$dP_{s}${\ on
}$\mathcal{F}_{s}${\ is defined as an induced measure }$i_{s}^{\ast}dP${\ with
respect to the embedding mapping }$i_{s}:\mathcal{F}_{s}\rightarrow
\mathcal{F},s\in\mathbb{Z}_{+}${. If we define the mathematical expectation
}$E_{s}(x_{t})${\ of the process }$x_{t}:\Omega\rightarrow\mathcal{H}${\ for
}$t>s\in\mathbb{Z}_{+}${\ and find that }$E_{s}(x_{t})=x_{s},${\ then this
process is called \cite{a-2,Y-1} a \emph{martingale process}. }

{Let }$f:\mathcal{H}\rightarrow\mathbb{R}_{+}${\ be a mapping that
characterizes the degree of usefulness of the choice of the element }%
$x\in\mathcal{H},${\ which models the database of the share package of the
bank portfolio. Then the function }%
\begin{equation}
V(a):=sup_{\tau}E_{a}({f(x_{t})}), \label{e2.3}%
\end{equation}
{where the supremum is taken over all possible Markov stop moments of the
process }$x_{t}:\Omega\rightarrow\mathcal{H},t\in\mathbb{Z}_{+},${\ under the
condition that }$x_{0}=a\in\mathcal{H},${\ is called the \emph{price of the
problem} of the optimal stop of the probability process, and can serve as a
client-buyer's choice price of the most valuable share package from the bank
portfolio in the "zeitnot" \ stock market. For the competing model of the
stock market in the bank portfolio environment, we need to construct the
corresponding price function of the optimal choice \cite{a-00,Y-1} of the most
wanted share package for every client-buyer, using the "zeitnot" \ stock
conditions of this process. }

{For the sake of convenience we suppose that there are only two clients-buyers
competing with each other at the time when the choice of the most valuable
share package from the proposed portfolio with the finite number }%
$N\in\mathbb{Z}_{+}${\ of the elements is made. All the share packages }%
$A_{i},i=1,\ldots,N${\ will be enumerated in such a way that }%
\begin{equation}
W(A_{1})<W(A_{2})<...<W(A_{N}), \label{e2.3a}%
\end{equation}
{where }\{$W(A_{i}):1\leq i\leq N\}${\ are share package values whose specific
expression is not important. The probability space }$\Omega${\ obviously
consists of all possible permutations }$\omega:=\{\omega_{1},\omega
_{2},...,\omega_{N}\}${\ of the set of numbers }$\{1,2,...,N\},${\ and we
assume that all of them have the same probability, since under the "zeitnot"
\ stock market situation conditions preliminary information is not important.
Thus, we denote the process of making the choice of the share package }%
$\omega_{n},\ n=1,\ldots,N$ {by the client-buyers in the }$n${-time round as
}$X_{n}^{(p)}(\omega)=\omega_{n},\ p=1,2${. In addition, we also denote the
stop moments of the process, which will result in the largest values of the
mathematical expectations of the corresponding price functions of the share
packages choice, as }$\tau_{p}(\omega)\in\mathcal{H}:=\{0,1,2,...,N\},\ p=1,2$%
{. The choice process of the most desirable share package }$A_{N}${, which
implicitly has the number }$N,${\ is complicated by the fact, that after the
share packages }$(X_{1}^{(p)},X_{2}^{(p)},...,X_{n}^{(p)}),\ p=1,2,${\ are
chosen and returned to the portfolio }$n(\in\{1,\ldots,N\})${\ times, because
each client-buyer lacks the information about their true prescribed price
values, and can only see their relative placement in the choice process, that
is, }$X_{i}^{(p)}<X_{j}^{(p)},${\ if }$W(A_{i})<W(A_{j})\ i\neq j\leq
n,\ p=1,2.${\ Consequently, it is natural to introduce families of }$\sigma
${-algebras of the events }$\mathcal{F}_{n}^{(p)},n=1,\ldots,N,p=1,2,$%
{\ induced the events }$(X_{i}^{(p)}<X_{j}^{(p)},i\neq j\leq n):=\mathcal{F}%
_{n}^{(p)},${\ where }$\mathcal{F}_{1}^{(p)}:=\{\emptyset,\Omega
\},\ p=1,2,${\ and to define two sets of new characteristic random quantities,
taking into account the above competition process involving the choice of the
most valuable share package. Let the mathematical expectations }%
\begin{equation}
V_{\tau_{1}}^{(1)}(\tau_{2}):=c_{\alpha}E\{\chi{_{\{X_{\tau_{1}}%
^{(1)}=N,X_{\tau_{2}}^{(2)}\neq N\}}}+\chi_{\{X_{\tau_{1}}^{(1)}=N,X_{\tau
_{2}}^{(2)}=N,\tau_{1}<\tau_{2}\}}\}- \label{e2.4}%
\end{equation}%
\begin{equation}
-\alpha\sum_{k=1}^{\tau_{1}-1}(k/N^{2})E\{\chi{_{\{X_{k}^{(1)}\neq
N,X_{\tau_{2}}^{(2)}\neq N\}}+\chi_{\{X_{k}^{(1)}\neq N,X_{\tau_{2}}%
^{(2)}=N,k<\tau_{2}\}}\}},\nonumber
\end{equation}
{and }%
\begin{equation}
V_{\tau_{2}}^{(2)}(\tau_{1}):=c_{\alpha}E\{\chi{_{\{X_{\tau_{2}}%
^{(2)}\nonumber=N,X_{\tau_{1}}^{(1)}\neq N\}}+\chi_{\{X_{\tau_{2}}%
^{(2)}=N,X_{\tau_{1}}^{(1)}=N,\tau_{2}<\tau_{1}\}}\}}- \label{e2.4a}%
\end{equation}

\begin{equation}
-\alpha\sum_{k=1}^{\tau_{2}-1}(k/N^{2})E\{\chi{_{\{X_{k}^{(2)}\neq
N,X_{\tau_{1}}^{(1)}\neq N\}}+\chi_{\{X_{k}^{(2)}\neq N,X_{\tau_{1}}%
^{(1)}=N,k<\tau_{1}\}}\}}%
\end{equation}
{define the corresponding price functions of the choice process of the most
desirable share package for both client-buyers, where }$c_{\alpha}>0${\ is a
fixed parameter representing the \textquotedblright
promotional\textquotedblright\ bank encouragement for the client-buyer to
purchase the share package from the portfolio, }$\alpha\in(0,1)${\ is a
corresponding coefficient of the \textquotedblleft fee\textquotedblleft\ for
every refusal of purchase of the share packages, and }$\tau_{1},\tau_{2}%
\in\mathcal{H}${\ are the corresponding Markov stop moments of the processes.
Since the choice processes for every client-buyer are analogous, it suffices
to consider in detail only the first problem of choosing the most valuable
share package from the following two problems: }%
\begin{equation}
arg\sup_{\tau_{1}}V_{\tau_{1}}^{(1)}(\tau_{2})=\tau_{1}^{\ast},\ \ \ arg\sup
_{\tau_{2}}V_{\tau_{2}}^{(2)}(\tau_{1})=\tau_{2}^{\ast}. \label{e2.5}%
\end{equation}
{In order to the extremum problems (\ref{e2.5}) we shall use the method of the
associated Markov processes for the Markov stop moments of the choice process,
which we describe next. }

\subsection{{Associated Markov process}}

{Let us consider the following sequence of the price function of the choice of
the most valuable share package by the first client-buyer: }%
\begin{equation}
V_{n}^{(1)}(\tau_{2}):=c_{\alpha}(P\{X_{n}^{(1)}=N,X_{\tau_{2}}^{(2)}\neq
N\}+P\{X_{n}^{(1)}=N,X_{\tau_{2}}^{(2)}=N,n<\tau_{2}\})-\nonumber
\end{equation}%
\begin{equation}
-\alpha\sum_{k=1}^{n-1}(k/N^{2})(P\{X_{k}^{(1)}\neq N,X_{\tau_{2}}^{(2)}\neq
N\}+P\{X_{k}^{(1)}\neq N,X_{\tau_{2}}^{(2)}=N,k<\tau_{2}\}), \label{e3.1}%
\end{equation}
{where }$n=1,\ldots,\tau_{1},\alpha\in(0,1),c_{\alpha}>0,${\ and it is assumed
that the second client-buyer follows the optimal (so called \textquotedblright
threshold\textquotedblright) strategy with the Markov stop moment }$\tau
_{2}(l)>l${\ under the condition that Markov stop moment of the choice of the
first client-buyer is }$\tau_{1}(l)=l\in\mathcal{H}.${\ To add specificity to
the choice strategy of the most valuable shares package by the first
client-buyer, let us calculate the corresponding probabilities (\ref{e3.1})
taking into account the family of the associated }$\sigma${-algebras
}$\mathcal{F}_{n}^{(p)},n=1,\ldots,\tau_{1},p=1,2:${\ }%
\begin{align}
V_{n}^{(1)}(\tau_{2})  &  =c_{\alpha}P\{X_{n}^{(1)}=N|\mathcal{\mathcal{F}%
}_{n}^{(1)}\}[P\{X_{\tau_{2}}^{(2)}\neq N\}\label{e3.2}\\
+P\{X_{\tau_{2}}^{(2)}  &  =N,n<\tau_{2}\}]\nonumber\\
-\alpha\sum_{k=1}^{n-1}\frac{k}{N^{2}}P\{X_{k}^{(1)}  &  \neq N\}[P\{X_{\tau
_{2}}^{(2)}\neq N\}+P\{X_{\tau_{2}}^{(2)}=N,k<\tau_{2}\}]\nonumber
\end{align}

{It should be mentioned, that for }$n=1,\ldots,\tau_{1}${\ the conditional
probability }%
\begin{align}
P\{X_{n}^{(1)}  &  =N|\mathcal{\mathcal{F}}_{n}^{(1)}\}=P\{X_{n}^{(1)}%
=N:X_{n}^{(1)}>\max(X_{1}^{(1)},X_{2}^{(1)},..,X_{n-1}^{(1)})\}\nonumber\\
&  =P\{X_{n}^{(1)}=N\}/P\{X_{n}^{(1)}>\max(X_{1}^{(1)},X_{2}^{(1)}%
,..,X_{n-1}^{(1)})\}\nonumber\\
&  =\frac{1}{N}/(\frac{(n-1)!}{n!})1_{\{X_{n}^{(1)}>\max(X_{1}^{(1)}%
,X_{2}^{(1)},..,X_{n-1}^{(1)})\}}, \label{e3.3}%
\end{align}
{and for every }$k=1,\ldots,n${\ the conditional probability }%
\begin{align}
P\{X_{\tau_{2}}^{(2)}  &  =N,k<\tau_{2}\}+P\{X_{\tau_{2}}^{(2)}\neq
N,n<\tau_{2}\}\nonumber\\
&  =1-P\{X_{\tau_{2}}^{(2)}=N,\tau_{2}\leq k\}. \label{e3.4}%
\end{align}
{Thus, the price function of the choice (\ref{e3.2}) for the first
client-buyer for }$n=1,\ldots,\tau_{1}$ {has the following form: }%
\begin{equation}%
\begin{array}
[c]{c}%
V_{n}^{(1)}(\tau_{2})=\frac{c_{\alpha}n}{N}(1-P\{X_{\tau_{2}}^{(2)}=N,\tau
_{2}\leq n\}\\
\label{e3.5}-\frac{\alpha(N-1)}{N}\sum_{k=1}^{n-1}\frac{k}{N^{2}}%
(1-P\{X_{\tau_{2}}^{(2)}=N,\tau_{2}\leq k\}.
\end{array}
\end{equation}
{To calculate the probabilities }$P\{X_{\tau_{2}}^{(2)}=N,\tau_{2}\leq
k\},k=1,\ldots,n,${\ in the expression (\ref{e3.5}) we need to consider the
random sequences of the Markov stop moments associated with the process of
choosing the most valuable share package by the client-buyers: }%
\begin{equation}
x_{n}^{(p)}:=\min\{t>x_{n-1}^{(p)}:X_{t}>\max(X_{t-1},...,X_{1})\},
\label{e3.6}%
\end{equation}
{where }$x_{n}^{(p)}\in\mathcal{H}${\ is a moment of choice of the next
candidate for the most valuable share package by the corresponding
client-buyer. The random sequences (\ref{e3.6}) are figure definitively for
the price function (\ref{e3.5}), whose main properties are defined \cite{a-3}
by the following lemma. }

\begin{lemma}
{\label{Lm_e3.1} The sequences }$x_{n}^{(p)}\in\mathcal{H},n=1,\ldots
,N,p=1,2,${\ in the form (\ref{e3.6}) are discrete Markov chains on the phase
space }$\mathcal{H}${\ with the transition probabilities }%
\begin{equation}
p_{ij}=\left\{
\begin{array}
[c]{c}%
\frac{i}{j(j-1)},\text{ \ \ \ \ \ }0\leq i<j;\text{ \ \ \ \ }0,\ \ \ \ i\geq
j\geq0,\\
\\
\ \ \ 1,\text{ \ \ \ \ \ \ \ }i=0,\ j=1;\text{ \ \ \ }0,\text{ \ \ \ }%
i=0,\ j>1,\\
\frac{i}{N},\text{\ \ \ \ \ \ }j=0;\text{ \ \ \ \ \ \ \ \ \ }0,\ \ \ \ \text{
\ \ \ }i\geq j>0
\end{array}
\right.  \label{e3.7}%
\end{equation}
{for all }$0\leq i,j\leq N,${\ where the additional state }$\{0\}${\ of the
sequences break is added, which the process settles into after receiving the
most valuable share package. }
\end{lemma}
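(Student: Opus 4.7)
The plan is to reduce the dynamics of the record-time sequence $x_n^{(p)}$ to the classical relative-rank decomposition of a uniform random permutation. For each $p\in\{1,2\}$, I would introduce the relative rank $R_k^{(p)}$ of $X_k^{(p)}$ among $X_1^{(p)},\ldots,X_k^{(p)}$. A well-known fact says that $R_1^{(p)},R_2^{(p)},\ldots,R_N^{(p)}$ are mutually independent, with $R_k^{(p)}$ uniformly distributed on $\{1,\ldots,k\}$. The decisive observation is that position $k$ is a record -- that is, $X_k^{(p)}>\max(X_1^{(p)},\ldots,X_{k-1}^{(p)})$ -- if and only if $R_k^{(p)}=k$, so the sequence $x_n^{(p)}$ defined in (\ref{e3.6}) is a measurable functional of the independent family $\{R_k^{(p)}\}_{k=1}^{N}$.

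The Markov property then falls out for free. Conditioned on $x_n^{(p)}=i$, the past record times $x_1^{(p)},\ldots,x_{n-1}^{(p)}$ are measurable with respect to $\sigma(R_1^{(p)},\ldots,R_i^{(p)})$, whereas the future record times depend only on the starting state $i$ together with $\sigma(R_{i+1}^{(p)},R_{i+2}^{(p)},\ldots)$. Independence of these two $\sigma$-algebras shows that the conditional distribution of $x_{n+1}^{(p)}$ given the entire past depends only on $i$.

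To verify the transition formula (\ref{e3.7}) for $0<i<j\leq N$, I would identify the event $\{x_{n+1}^{(p)}=j\mid x_n^{(p)}=i\}$ with $\{R_{i+1}^{(p)}<i+1,\ldots,R_{j-1}^{(p)}<j-1,\ R_j^{(p)}=j\}$ and compute its probability by a telescoping product using independence and uniformity:
\[
p_{ij}\;=\;\Bigl(\prod_{k=i+1}^{j-1}\frac{k-1}{k}\Bigr)\cdot\frac{1}{j}\;=\;\frac{i}{j(j-1)}.
\]
For the absorbing transition $p_{i0}$, the relevant event is that no further record ever occurs after position $i$ -- equivalently, that $X_i^{(p)}$ is the global maximum $N$ -- which in rank form reads $\{R_k^{(p)}<k:i<k\leq N\}$ and whose probability telescopes to $\prod_{k=i+1}^{N}(k-1)/k=i/N$. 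The boundary case $p_{01}=1$ records the triviality that $X_1^{(p)}$ is always a record, and the normalization check $\sum_{j=i+1}^{N}\frac{i}{j(j-1)}+\frac{i}{N}=i\bigl(\frac{1}{i}-\frac{1}{N}\bigr)+\frac{i}{N}=1$ confirms that no further transitions need be specified.

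The main obstacle I expect is conceptual rather than computational: state $\{0\}$ in (\ref{e3.7}) plays a dual role, serving both as a pre-start state and as the absorbing (cemetery) state into which the chain drops once the global maximum has been purchased. I would handle this by killing the chain the moment $X_{x_n^{(p)}}^{(p)}=N$ is first achieved and reserving the convention $p_{01}=1$ strictly for the initial transition. With these conventions, the independence of the relative ranks above simultaneously delivers the Markov property and all three branches of the transition matrix in (\ref{e3.7}), uniformly in $p\in\{1,2\}$.
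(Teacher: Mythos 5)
Your proof is correct, and it coincides with the intended argument: the paper offers no proof of this lemma at all, deferring instead to Gilbert and Mosteller \cite{a-3}, where exactly your relative-rank (record-time) decomposition of a uniform random permutation is the engine. Your telescoping evaluations $p_{ij}=\frac{i}{j-1}\cdot\frac{1}{j}$ and $p_{i0}=\frac{i}{N}$, the normalization check, and the use of independence of the ranks to obtain the Markov property are all accurate, so your write-up in fact supplies the details the paper omits.
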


{Let us denote the optimal stop moments of the consequences (\ref{e3.6}) as
}$\hat{\tau_{p}}\in\mathcal{H},\ p=1,2${. Then the following relationships }%
\begin{equation}
\tau_{p}=x_{\hat{\tau}_{p}}, \label{e3.8}%
\end{equation}
{hold, where }$p=1,2.$ Now{ consider the arbitrary Markov sequence in the form
of (\ref{e3.6}) and the following decomposition of the phase space
}$\mathcal{H}${\ into the direct sum of the subspaces associated with the
sequence of price functions (\ref{e3.5}), which is }%
\begin{align}
\mathcal{\mathcal{H}}_{+}  &  :=\{j\in\mathcal{\mathcal{H}}:\mathcal{(P}%
V^{(1)}(\tau_{2}))_{j}>V_{j}^{(1)}(\tau_{2})\},\label{e3.9}\\
\mathcal{\mathcal{H}}_{-}  &  :=\{j\in\mathcal{\mathcal{H}}:\mathcal{(P}%
V^{(1)}(\tau_{2}))_{j}\leq V_{j}^{(1)}(\tau_{2})\},\nonumber
\end{align}
{where }$\mathcal{P}:=\{p_{ij}:0\leq i,j\leq N\}${\ is a matrix of the
transition probabilities (\ref{e3.7}). Then the following theorem \cite{Y-12}
obtains. }

\begin{theorem}
{\label{Th{e3.2}} Let the matrix }$\mathcal{P}${\ of the transition
probabilities (\ref{e3.7}) be such that }$p_{ij}=0${\ for all }$i\in
\mathcal{H}_{+}${\ and }$j\in\mathcal{H}_{-}${. Then the moment }$\hat{\tau
}_{1}\in\mathcal{H}${\ of the first entrance of the random sequence }%
$\{x_{n}^{(1))}:n=0,\ldots,N\}${\ into the set }$\mathcal{H}_{-}${\ is optimal
for the sequence of price function }$\{V_{n}^{(1)}(\tau_{2}):n=0,\ldots
,N\}.${\ }
\end{theorem}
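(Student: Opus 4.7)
The plan is to recast the first client-buyer's extremum problem (\ref{e2.5}) as a classical optimal stopping problem for the associated Markov chain $\{x_n^{(1)}\}$ of Lemma \ref{Lm_e3.1}, with payoff $f(j) := V_j^{(1)}(\tau_2)$ given by (\ref{e3.5}). Thanks to the time change $\tau_1 = x^{(1)}_{\hat\tau_1}$ from (\ref{e3.8}), optimizing over $\tau_1$ in the original picture is equivalent to optimizing over Markov stopping times $\hat\tau_1$ of the chain $\{x_n^{(1)}\}$, to which the Section~2 machinery applies directly (with $\alpha=1$ and zero observation fee, since the fee has already been absorbed into the definition of $f$).

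Given this reduction, I would invoke Theorem \ref{Tm_Y1.8} to identify the value function
\begin{equation*}
V^{\ast}(j) := \sup_{\hat\tau} E_j\{f(x^{(1)}_{\hat\tau})\}
\end{equation*}
as the least solution of the Bellman equation $V^{\ast}(j) = \max\{f(j),(\mathcal{P}V^{\ast})(j)\}$, and then Theorem \ref{Tm_Y1.9} to conclude that the first entrance time into the set $\Gamma_+ := \{j \in \mathcal{H} : V^{\ast}(j) = f(j)\}$ is an optimal Markov stopping time for the chain. With this in hand the theorem reduces to the set-theoretic identification $\Gamma_+ = \mathcal{H}_-$.

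To secure $\Gamma_+ = \mathcal{H}_-$ I would argue by two inclusions. For $\mathcal{H}_- \subseteq \Gamma_+$, I would note that on $\mathcal{H}_-$ the defining inequality $(\mathcal{P}f)(j) \leq f(j)$ makes $f$ itself an excessive majorant in the local sense; combining this with the \emph{smallest-excessive-majorant} characterization of Theorem \ref{Tm_Y1.7} yields $V^{\ast}=f$ on $\mathcal{H}_-$. For the reverse inclusion $\Gamma_+ \subseteq \mathcal{H}_-$, the hypothesis $p_{ij}=0$ for all $i\in\mathcal{H}_+$, $j\in\mathcal{H}_-$ is essential: it guarantees that one-step transitions from $\mathcal{H}_+$ remain inside the complement of $\mathcal{H}_-$, so iterating the Bellman equation on $\mathcal{H}_+$ preserves the strict inequality $(\mathcal{P}V^{\ast})(i) \geq (\mathcal{P}f)(i) > f(i)$ and forces $V^{\ast}(i)=(\mathcal{P}V^{\ast})(i)>f(i)$, whence $\Gamma_+\cap\mathcal{H}_+ = \emptyset$.

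The main obstacle will be the careful execution of these two inclusions, and in particular propagating the purely one-step inequality $(\mathcal{P}f)(j) \leq f(j)$ defining $\mathcal{H}_-$ into the global identity $V^{\ast}=f$ on $\mathcal{H}_-$. This step requires iterating the Bellman operator while exploiting the no-crossover condition on $\mathcal{P}$ supplied by the hypothesis, together with careful accounting of the absorbing state $\{0\}$ introduced in (\ref{e3.7}) to ensure that the myopic first-entrance rule terminates almost surely and that the supremum in $V^{\ast}$ is actually attained. Once $\Gamma_+ = \mathcal{H}_-$ is established, optimality of the first-entrance time $\hat\tau_1$ into $\mathcal{H}_-$ for the sequence $\{V_n^{(1)}(\tau_2)\}_{n=0}^N$ is immediate from Theorem \ref{Tm_Y1.9}.
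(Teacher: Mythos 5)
First, a point of comparison: the paper itself gives no proof of this theorem --- it is stated as a quoted result from \cite{Y-12} --- so there is no in-text argument to measure yours against. Your skeleton is the natural one for supplying the missing proof from the Section 2 machinery: pass to the associated chain of Lemma \ref{Lm_e3.1} with payoff $f(j):=V^{(1)}_{j}(\tau_{2})$, use Theorems \ref{Tm_Y1.8} and \ref{Tm_Y1.9} to reduce everything to the set identity $\Gamma_{+}=\mathcal{H}_{-}$, and prove two inclusions.

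The gap is that you have attached the hypothesis on $\mathcal{P}$ to the wrong inclusion, and the inclusion you treat as soft is precisely the one that fails without it. The inclusion $\Gamma_{+}\subseteq\mathcal{H}_{-}$ is automatic and needs no structural assumption: for $i\in\mathcal{H}_{+}$ one has $(\mathcal{P}V^{\ast})(i)\geq(\mathcal{P}f)(i)>f(i)$ simply because $V^{\ast}\geq f$ pointwise, whence $V^{\ast}(i)\geq(\mathcal{P}V^{\ast})(i)>f(i)$ and $i\notin\Gamma_{+}$. The delicate inclusion is $\mathcal{H}_{-}\subseteq\Gamma_{+}$, and your argument for it --- that the one-step inequality $(\mathcal{P}f)(j)\leq f(j)$ makes $f$ ``an excessive majorant in the local sense,'' to be combined with Theorem \ref{Tm_Y1.7} --- is not valid: excessivity at a single state gives nothing, since a state where the one-step look-ahead says ``stop'' can still satisfy $V^{\ast}>f$ if continuing two or more steps is profitable (take a deterministic chain $a\to b\to c$ with $f(a)=1$, $f(b)=0$, $f(c)=10$; then $a\in\mathcal{H}_{-}$ but $V^{\ast}(a)=10>f(a)$). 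What rescues this inclusion is closedness of $\mathcal{H}_{-}$ under the dynamics, i.e.\ no transitions from $\mathcal{H}_{-}$ into $\mathcal{H}_{+}$: then $f$ restricted to $\mathcal{H}_{-}$ is a genuine excessive function for the chain up to its exit from $\mathcal{H}_{-}$ (which never occurs), the supermartingale/optional-stopping argument gives $E_{j}\{f(x_{\hat{\tau}})\}\leq f(j)$ for every $j\in\mathcal{H}_{-}$ and every stopping time, and hence $V^{\ast}=f$ there. Note in this connection that the condition as printed ($p_{ij}=0$ for $i\in\mathcal{H}_{+}$, $j\in\mathcal{H}_{-}$) must be a typo for this closedness condition ($i\in\mathcal{H}_{-}$, $j\in\mathcal{H}_{+}$): read literally it would make the first entrance into $\mathcal{H}_{-}$ from $\mathcal{H}_{+}$ impossible, whereas the chain (\ref{e3.7}) is strictly increasing and therefore cannot move from $\mathcal{H}_{-}=\{l,\ldots,N\}$ back into $\mathcal{H}_{+}=\{1,\ldots,l-1\}$ of Lemma \ref{Lm_e3.3}, which is exactly the closedness one needs. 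So you should reassign the hypothesis to the inclusion $\mathcal{H}_{-}\subseteq\Gamma_{+}$ and run the supermartingale argument there, and drop the claim that it is ``essential'' for $\Gamma_{+}\subseteq\mathcal{H}_{-}$, which holds unconditionally.
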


{In order to apply theorem \ref{Th{e3.2}}, we calculate the probabilities
}$P\{X_{\tau_{2}}^{2}=N,\tau_{2}\leq k\}${\ in the (\ref{e3.5}) for all
}$0\leq k\leq N${\ under the condition that }$\tau_{1}(l)=x_{\hat{\tau}_{1}%
}^{(1)}(l):=l\in\mathcal{H}${. Then, if }$k=1,\ldots,l-1${, the probability }%
\begin{equation}
P\{X_{\tau_{2}}^{(2)}=N,\tau_{2}\leq k\}=P\{X_{\tau_{2}(l)}^{(2)}=N,\tau
_{2}(l)\leq k\}=0, \label{e3.10}%
\end{equation}
{since }$\tau_{2}(l)\geq l,${\ and if }$k=l,\ldots,N,${\ }%
\begin{align}
P\{X_{\tau_{2}(l)}^{(2)}  &  =N,\tau_{2}(l)\leq k\}=\sum\limits_{j=l}%
^{k}P\{X_{\tau_{2}(l)}^{(2)}=N,\tau_{2}(l)\leq j\}\nonumber\\
&  =\sum\limits_{j=l}^{k}P\{X_{\tau_{2}(l)}^{(2)}=N|\tau_{2}(l)=j\}P\{\tau
_{2}(l)=j\}\label{e3.11}\\
&  =\sum\limits_{j=l}^{k}P\{X_{j}^{(2)}=N:X_{j}^{(1)}>\max(X_{1}^{(2)}%
,X_{2}^{(2)},..,X_{j-1}^{(2)})\}P\{\tau_{2}(l)=j\}\nonumber\\
&  =\sum\limits_{j=l}^{k}P\{\tau_{2}(l)=j\}\frac{j}{N}.\nonumber
\end{align}
{In order to calculate the probability }$P\{\tau_{2}(l)=j:j\in\mathcal{H}\}${,
we note that it follows the direct Kolmogorov equation }%
\begin{equation}
P\{\tau_{2}(l)=j\}=\left\{
\begin{array}
[c]{c}%
1,\text{ \ \ \ \ \ \ \ \ \ \ \ \ \ \ \ \ \ \ \ \ \ \ \ }\ j=1,\\
\sum\limits_{i=1}^{j-1}P\{x_{\hat{\tau}_{2}(l)}^{(2)}=i\}p_{ij},\text{
}\ j=\overline{2,l-1},\text{\ \ \ \ \ }\\
\sum\limits_{i=1}^{l-1}P\{x_{\hat{\tau}_{2}(l)}^{(2)}=i\}p_{ji},\text{
\ \ \ }j=\overline{l,N},
\end{array}
\right.  \label{e3.12}%
\end{equation}
{\cite{a-6,Y-13} and (\ref{e3.12}) that }%
\begin{equation}
P\{\tau_{2}(l)=j\}=\left\{
\begin{array}
[c]{c}%
\frac{1}{j},\text{ \ \ \ \ \ \ \ \ }j=\overline{1,l-1},\\
\ \ \ \ \ \ \ \frac{l-1}{j(j-1)},\text{ \ \ \ \ \ \ \ \ \ }\ j=\overline
{l,N},\text{\ \ \ \ \ \ \ \ \ \ \ \ \ \ \ \ }%
\end{array}
\right.  \label{e3.13}%
\end{equation}
{\label{e3.14} From (\ref{e3.13}) and (\ref{e3.11}) we can find for
}$k=\overline{l,N}${, that }%
\begin{equation}
\{X_{\tau_{2}(l)}^{(2)}=N,\tau_{2}(l)\leq k\}=\sum\limits_{j=l}^{k}\frac
{l-1}{N(j-1)}.
\end{equation}
{Thus, substituting the result of (\ref{e3.14}) into (\ref{e3.15}), we can get
the final expression for the price function for the first client-buyer: }%
\begin{equation}%
\begin{array}
[c]{c}%
V_{n}^{(1)}(\tau_{2})=c_{\alpha}n(1-\frac{l-1}{N}\sum\limits_{j=l}^{n}\frac
{1}{j-1})-\frac{\alpha(N-1)}{N}\sum\limits_{k=1}^{n-1}\frac{k}{N^{2}}\\
-\frac{\alpha(N-1)}{N}\sum\limits_{k=l}^{l-1}\frac{k}{N^{2}}-\frac
{\alpha(N-1)}{N}\sum\limits_{k=l}^{n}\frac{k}{N^{2}}(1-\frac{l-1}{N}%
\sum\limits_{j=l}^{k}\frac{1}{j-1})\\
=c_{\alpha}n(1-\frac{l-1}{N}\sum\limits_{j=l}^{n}\frac{1}{j-1})-\frac
{\alpha(N-1)n(n+1)}{2N^{3}}+\frac{\alpha(N-1)(l-1)}{N^{2}}\sum\limits_{k=l}%
^{n}\frac{k}{N^{2}}\sum\limits_{j=l}^{k}\frac{1}{j-1}%
\end{array}
\label{e3.15}%
\end{equation}
{for all }$n=1,\ldots,N${. Now in order to solve the first equation in
(\ref{e2.5}) it is easy to calculate }$\tau_{1}^{\ast}=\arg V_{\tau_{1}}%
^{(1)}(\tau_{2})\in\mathcal{H}${\ using Theorem \ref{Th{e3.2}}. Thus, the
obtained sequence (\ref{e3.15}) of the optimal choice of the most valuable
shares package by the first client must be stopped at the moment }$\tau
_{1}(l)=l=x_{\hat{\tau}_{1}(l)}^{(1)}\in\mathcal{H}${, which we can find
solving the inequalities }%
\begin{align}
\mathcal{(P}V^{(1)}(\tau_{2}))_{l-1}  &  >V_{l-1}(\tau_{2}),\label{e3.16}\\
\mathcal{(P}V^{(1)}(\tau_{2}))_{l}  &  \leq V_{l}^{(1)}(\tau_{2}).\nonumber
\end{align}
{Let }$l\in\mathcal{H}${\ satisfy the inequalities (\ref{e3.16}). The the
following lemma is readily verified. }

\begin{lemma}
{\label{Lm_e3.3} Under the condition that promotional coefficient }$c_{\alpha
}\geq\alpha/2>0$,{\ the sequence (\ref{e3.15}) induces the decomposition of
the phase space }$\mathcal{H}${ with }%
\begin{equation}
\mathcal{\mathcal{H}}_{+}=\{1,\ldots,l-1\},\text{ \ \ \ \ \ \ }%
\mathcal{\mathcal{H}}_{-}=\{l,\ldots,N\}. \label{e3.17}%
\end{equation}

\end{lemma}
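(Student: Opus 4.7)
The plan is to verify the monotone threshold structure by directly comparing $V^{(1)}_n(\tau_2)$ with $(\mathcal{P}V^{(1)}(\tau_2))_n$ using the explicit formula \eqref{e3.15} and the transition probabilities \eqref{e3.7}. By Lemma \ref{Lm_e3.1}, for each state $i\geq 1$ the one-step evolution is
\begin{equation}
(\mathcal{P}V^{(1)}(\tau_2))_i = \frac{i}{N}\,V^{(1)}_0(\tau_2) + \sum_{j=i+1}^{N}\frac{i}{j(j-1)}\,V^{(1)}_j(\tau_2),
\end{equation}
so everything reduces to a study of the scalar sequence $\Delta_n := V^{(1)}_n(\tau_2) - (\mathcal{P}V^{(1)}(\tau_2))_n$. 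The goal is to prove that $\Delta_n < 0$ for $n = 1,\ldots,l-1$ and $\Delta_n \geq 0$ for $n = l,\ldots,N$, which is precisely the decomposition \eqref{e3.17}.

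First I would split $V^{(1)}_n(\tau_2)$ into its two structural pieces: the \emph{gain} term proportional to $c_\alpha n$ (possibly attenuated by the conditional-success factor $1-\tfrac{l-1}{N}\sum_{j=l}^n \tfrac{1}{j-1}$) and the \emph{fee} term proportional to $\alpha$. Plugging this decomposition into the expression for $(\mathcal{P}V^{(1)}(\tau_2))_n$ and invoking the telescoping identity $\sum_{j=n+1}^{N}\tfrac{1}{j(j-1)} = \tfrac{1}{n}-\tfrac{1}{N}$ collapses most of the combinatorial sums. The outcome is a closed-form expression for $\Delta_n$ which is linear in $c_\alpha$ and $\alpha$ with explicit $n$- and $l$-dependent coefficients.

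Next I would establish monotonicity: one checks that $\Delta_{n+1}-\Delta_n$ has a definite sign for all $n\in\{1,\ldots,N-1\}$. Using the transition law \eqref{e3.7}, the forward difference $(\mathcal{P}V^{(1)})_n - (\mathcal{P}V^{(1)})_{n+1}$ telescopes cleanly, and a direct evaluation shows that $\Delta_{n+1}-\Delta_n$ is a positive multiple of the promotional-to-fee balance $c_\alpha - \alpha/2$ (plus strictly positive remainder terms coming from the $1/N$-corrections). Thus the hypothesis $c_\alpha \geq \alpha/2$ forces $\Delta_n$ to be nondecreasing in $n$, which guarantees that the sign-change of $\Delta_n$ occurs at a single threshold index. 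This threshold is precisely the $l$ defined by \eqref{e3.16}, so $\mathcal{H}_- = \{l,\ldots,N\}$ and $\mathcal{H}_+ = \{1,\ldots,l-1\}$, as claimed.

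The main obstacle is the book-keeping for the double sum $\sum_{k=l}^{n}\tfrac{k}{N^2}\sum_{j=l}^{k}\tfrac{1}{j-1}$ appearing in \eqref{e3.15}; swapping the order of summation and using the telescoping identity above is what makes the coefficients in $\Delta_{n+1}-\Delta_n$ reduce to the clean form $c_\alpha - \alpha/2$ plus controlled error. Everything else is routine verification, and the threshold assertion then follows immediately from Theorem \ref{Th{e3.2}}.
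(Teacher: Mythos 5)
The paper itself offers no argument for this lemma --- it is stated as ``readily verified'' --- so there is no in-paper proof to measure you against; your strategy (reduce everything to the sign of $\Delta_n := V^{(1)}_n(\tau_2)-(\mathcal{P}V^{(1)}(\tau_2))_n$ and show a single sign change via monotonicity of $\Delta_n$) is the standard and sensible route, and your formula for $(\mathcal{P}V^{(1)})_i$ correctly reads off the transition law (\ref{e3.7}). However, as written the sketch has a genuine gap exactly where the content of the lemma lives: the assertion that ``a direct evaluation shows $\Delta_{n+1}-\Delta_n$ is a positive multiple of $c_\alpha-\alpha/2$ plus strictly positive remainder terms'' is never carried out, and it is not clearly the right mechanism. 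In (\ref{e3.15}) the promotional term enters at order $c_\alpha n/N$ (compare the normalization in (\ref{e3.5})) while the fee term enters at order $\alpha n^2/N^2$ after summation, so the two contributions to the increment of $\Delta_n$ do not obviously combine into a clean multiple of $c_\alpha-\alpha/2$; indeed the paper itself attributes the constraint $c_\alpha\ge\alpha/2$ to \emph{positivity of the price function} (see (\ref{e4.3})), which is a different role than the one you assign it. You would need to actually perform the interchange of summation and exhibit the sign of $\Delta_{n+1}-\Delta_n$ (or of $\Delta_n$ directly) with the hypothesis $c_\alpha\ge\alpha/2$ entering at an identified step.

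Two further points to repair. First, the formula (\ref{e3.15}) changes character at $n=l$ --- the attenuation factor $1-\tfrac{l-1}{N}\sum_{j=l}^{n}\tfrac{1}{j-1}$ and the double sum are active only for $n\ge l$ --- so a single monotonicity computation ``for all $n\in\{1,\dots,N-1\}$'' must be checked separately on $\{1,\dots,l-1\}$, on $\{l,\dots,N\}$, and across the seam at $n=l$; your sketch does not acknowledge this. Second, the closing appeal to the first-entrance theorem is circular in direction: that theorem \emph{uses} the decomposition (\ref{e3.9}) to conclude optimality of the first-entrance moment, whereas the decomposition (\ref{e3.17}) is precisely what must be established here from the sign pattern of $\Delta_n$ together with the defining inequalities (\ref{e3.16}); nothing ``follows from'' the theorem at this stage. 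You should also say what value you assign to $V^{(1)}_0$ in the absorbing state, since it appears in your expression for $(\mathcal{P}V^{(1)})_i$ with weight $i/N$ but is not defined by (\ref{e3.15}).
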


{It follows from Lemma \ref{Lm_e3.3} that }$\tau_{1}(l)=l\in\mathcal{H}${,
which satisfies the inequalities (\ref{e3.16}), and yields the optimal choice
strategy of the most valuable share package by the first client-buyer. It is
obvious from symmetry considerations that the competing choice problem
involving the behavior strategy of the second client-buyer must be the same. }

\subsection{{Asymptotic analysis}}

{The main equation of the choice process of the most valuable share package
for the optimal strategy (\ref{e3.16}) has the form: }%
\begin{align}
&  c_{\alpha}-\frac{\alpha(N-1)(l+1)}{2N^{3}}+\frac{\alpha(N-1)}{N^{2}%
}=\label{e4.1}\\
&  =c_{\alpha}(\sum\limits_{j=l-1}^{N-1}\frac{i}{j}-\frac{(l-1)}{N}%
\sum\limits_{j=l}^{N-1}\frac{1}{j}\sum\limits_{k=l-1}^{j}\frac{1}{k}%
)-\frac{\alpha l(N-1)}{2N^{3}}\sum\limits_{j=l+1}^{N}\frac{j+1}{j-1}%
+\nonumber\\
&  +\frac{\alpha(N-1)l(l-1)}{N^{2}}\sum\limits_{j=l+1}^{N}\frac{1}{j(j-1)}%
\sum\limits_{k=l}^{j}\frac{k}{N^{2}}\sum\limits_{j=l}^{k}\frac{1}%
{j-1}.\nonumber
\end{align}
{In order to simplify the analysis of the equation (\ref{e4.1}), we suppose
that the bank portfolio contains a large number }$N\in\mathbb{Z}_{+}${\ of
share packages. Thus, for the optimal choice strategy of the first
client-buyer the stop moment }$\tau_{1}(l):=l(N)\in\mathcal{H}${\ satisfies
asymptotic condition }$\lim_{N\rightarrow\infty}l(N)/N:=z\in(0,1)${. Taking
this into account, using asymptotic analysis \cite{Y-Ge,Y-14}, we find that
the relation (\ref{e4.1}) at }$N\rightarrow\infty${\ turns into the following
transcendental equation for finding the stop parameter }$z^{\ast}\in(0,1)${: }%
\begin{equation}
c_{\alpha}(1+\ln z+\frac{z}{2}\ln^{2}z)+\frac{\alpha}{2}z(1-z)=\frac{\alpha
}{2}z^{2}[lnz\frac{1}{2}(1-z)(3-z)]. \label{e4.2}%
\end{equation}
{The solution }$z^{\ast}\in(0,1)${\ depends heavily on the choice of a bank
\textquotedblleft gift\textquotedblright-parameter }$c_{\alpha}\in
\mathbb{R}_{+},${\ which is naturally limited by the positiveness of the price
function (\ref{e3.15}). Namely, it is easy to see that }%
\begin{equation}
c_{\alpha}-\alpha/2\geq0 \label{e4.3}%
\end{equation}
{must hold for every }$\alpha\in(0,1)${. If we assume the lowest risk
condition of losses of the bank shares seller, then the optimal choice is
}$c_{\alpha}=\alpha/2.${\ In this case equation (\ref{e4.2}) takes an
invariant form with respect to the interest rate of the \textquotedblleft
fee\textquotedblleft\ }$\alpha\in(0,1)${\ for purchases the potential desired
share package that have yet to be made by the client-buyer: }%
\begin{equation}
1+\ln z+\frac{z}{2}\ln^{2}z+z(1-z)=z^{2}[\ln z+\frac{1}{2}(1-z)(3-z)].
\label{e4.4}%
\end{equation}
{This transcendental equation (\ref{e4.4}) has the only one real solution
}$z^{\ast}\simeq0,21\in(0,1)${. Accordingly we can now formulate the next
behavior strategy as follows: When the number }$N\in\mathbb{Z}_{+}${\ of the
share packages in the bank portfolio is large enough, the optimal strategy of
the choice of the most valuable share package by the first client-buyer is to
compare the relative value of the first }$l=z^{\ast}N\in\mathbb{Z}_{+}%
${\ shares, and then to choose the first shares package whose value is greater
then all of those previously compared. }

\subsection{{Some conclusions}}

{Our portfolio competing share market model under the condition of "zeitnot"
\ stock choice of potentially the most valuable share package by client-buyers
appears to be a well known discrete Markov process on the phase space
}$\mathcal{H}=\{0,1,...,N\}${. As it has been shown, when the bank chooses the
most useful \textquotedblright promotional\textquotedblleft\ parameter
}$c_{\alpha}=\alpha/2\in(0,1)${, the client-buyer's optimal strategy choice of
the most valuable share package is defined by the universal transcendental
equation (\ref{e4.3}) independent the \textquotedblright fee\textquotedblleft%
-parameter }$\alpha\in(0,1)$ and{\ under the condition that the values of the
number of packages within the portfolio are large. }

{It should be noticed that our model is a somewhat simplified version of the
"zeitnot" \ stock behavior of clients/share buyers when they do not dispose of
a priori information about the qualitative characteristics of the portfolio.
Moreover, we assumed that every client-buyer possesses sufficient financial
capital for the purchase of any share package of the bank portfolio. }

{In the case if there exist either some financial constraints on clients funds
subject to portfolio share packages prices prescribed by a bank or several
quality parameters, the corresponding clients optimal behavior strategies are
essentially more complicated, and is a subject of analysis in the next
section. }

\section{{Mathematical Model of the Market with a Bivariant Profit Function}}

Our construction of the model with a bivariant profit function has some
similarities with the monovariant case, but there are some striking
differences as well.

\subsection{{Model description}}

{We take as a base the mathematical model of the bank share portfolio and the
process of client-buyer's choice of the share package described above and
developed in \cite{Y-1a}. Let us suppose that there are two competing
client-buyers in the process of choosing the most valuable share package with
a finite number }$N\in\mathbb{Z}_{+}${\ of elements. All share packages
}$A_{i},i=1,\ldots,N,${\ are a priori numbered in such a way that }%
\begin{equation}
W_{1}(A_{i})<W_{1}(A_{2})<...<W_{1}(A_{N}),\ \ \ W_{2}(A_{\sigma(1)}%
)<W_{2}(A_{\sigma(2)})<...<W_{2}(A_{\sigma(N)}), \label{Y1.1}%
\end{equation}
{where }$\{W_{i}(A_{j}):\ 1\leq j\leq N\},\ \ i=1,2${, are rankings of
usefulness characteristics of share packages, which are distributed
independently within a given portfolio; that is, the permutation }$\sigma\in
S_{N}${\ of the ordered set of numbers }$\{1,2,...,N\}${\ is random. The
probability space }$\Omega${\ consists of all possible pairs of permutations
}$\{\omega_{1},...,\omega_{N}\}\times\{\sigma(\omega_{1}),...,\sigma
(\omega_{N})\}${\ of the set of numbers }$\{1,2,...,N\}${, naturally assumed
to have equal probability. Thus, we denote the result of a client-buyer's
choice of the share packages }$A_{n}${, }$n=1,\ldots,N${, preceded by an }%
$n${-time examination as a }$\Omega_{n}^{(s)}:=(X_{n}^{(s)}(\omega
),(Y_{n}^{(s)}(\omega)))\in\{1,2,...,N^{(x)}:=N\}\times\{\sigma(1),\sigma
(2),..,\sigma(N^{(y)}:=N\},s=1,2${, and the Markov stop moments of the process
of choice of the most desired share package by client-buyers, under the
conditions that the values of mathematical expectations of the respective
choice price functions will be the largest, as }$\tau_{s}(\omega
)\in\mathcal{H}:=\{0,1,2,..,N\},s=1,2${. We choose the price function for the
first client-buyer in the following form: }%
\begin{equation}%
\begin{array}
[c]{c}%
V_{\tau_{1}}^{(1)}(\tau_{2})=c_{\alpha}[E\{\chi_{\{\Omega_{\tau_{1}}%
^{(1)}=(N^{(x)},N^{(y)}),\Omega_{\tau_{2}}^{(2)}\neq(N^{(x)},N^{(y)}%
)\vee\Omega_{\tau_{1}}^{(1)}=(\sigma^{-1}(N^{(x)}),N^{(y)}),\Omega_{\tau_{2}%
}^{(2)}\neq(\sigma^{-1}(N^{(x)}),N^{(y)})\}}\}+\\
+E\{\chi_{\{\Omega_{\tau_{1}}^{(1)}=(N^{(x)},N^{(y)}),\Omega_{\tau_{2}}%
^{(2)}=\newline(N^{(x)},N^{(y)}),\tau_{1}<\tau_{2}\vee\Omega_{\tau_{1}}%
^{(1)}\newline=(\sigma^{-1}(N^{(x)}),N^{(y)}),\Omega_{\tau_{2}}^{(2)}%
=(\sigma^{-1}(N^{(x)}),N^{(y)}),\tau_{1}<\tau_{2}\}}\}-\\
-\alpha\sum_{k=1}^{\tau_{1}-1}\frac{k}{N^{2}}[E\{\chi_{\{\Omega_{\tau_{1}%
}^{(1)}\neq(N^{(x)},N^{(y)}),\Omega_{\tau_{2}}^{(2)}\neq(N^{(x)},N^{(y)}%
)\vee\Omega_{k}^{(1)}\neq(\sigma^{-1}(N^{(x)}),N^{(y)}),\Omega_{\tau_{2}%
}^{(2)}\neq(\sigma^{-1}(N^{(x)}),N^{(y)})\}}\}+\\
+E\{\chi_{\{\Omega_{k}^{(1)}\neq(N^{(x)},N^{(y)}),\Omega_{\tau_{2}}^{(2)}%
\neq(N^{(x)},N^{(y)}),k<\tau_{2}\vee\Omega_{k}^{(1)}\neq(\sigma^{-1}%
(N^{x}),N^{y}),\Omega_{\tau_{2}}^{(2)}=(\sigma^{-1}(N^{(x)}),N^{(y)}%
),k<\tau_{2}\}}\}],
\end{array}
\label{Y1.2}%
\end{equation}
{where }$c_{\alpha}>0${\ is a corresponding bank gift-coefficient, and
}$\alpha>0${\ is a "fee"\ -coefficient for the unmade transaction of
purchase-sale of the shares package. The choice price function for the second
client is obtained in the same way. In order to calculate, for example, the
quantity }%
\begin{equation}
\tau_{1}^{\ast}:=\arg\sup_{\tau_{1}\in\mathcal{\mathcal{H}}}V_{\tau_{1}}%
^{(1)}(\tau_{2}), \label{Y1.3}%
\end{equation}
{which characterize the most optimal share package choice strategy of the
first client-buyer, we need to construct \cite{Y-1,Y-1a,Y-13} the basic
associated Markov sequences }%
\begin{equation}
x_{n+1}^{(s)}:=\min\{t>x_{n}^{(s)}:X_{t}^{(s)}>\max(X_{t-1}^{(s)}%
,...,X_{1}^{(s)})\vee(Y_{t}^{(s)}>\max(Y_{t-1}^{(s)},...,Y_{1}^{(s)})\},
\label{Y1.4}%
\end{equation}
{where the quantities }$x_{n}^{(s)}\in\mathcal{H},n=1,\ldots,N,s=1,2,${\ are
the moments of the most valuable shares package for the corresponding
clients-buyers. The Markov sequences (\ref{Y1.4}) are characterized
\cite{Y-1,Y-12,Y-13} by the following lemma. }

\begin{lemma}
{\label{Lm_Y2.1} The integer sequences (\ref{Y1.4}) are the discrete Markov
chains on the phase space }$\mathcal{H}${\ with the transition probabilities }%
\[
p_{ij}^{(s)}=\left\{
\begin{array}
[c]{c}%
\frac{\lbrack2j(j-1)-i]i^{2}}{j^{2}(j-1)^{2}(2i-1)},\ 1\leq
i<j;\ \ \ 0,\ \ \ i\geq j\geq0;\\
1,\ \ \ i=0,\ \ j=1;\ \ \ 0,\ \ \ i=0,j>1,\ \ \ \ \ 0;\ \ \\
1-\sum_{k=i+1}^{N}\frac{[2k(k-1)-i]i^{2}}{k^{2}(k-1)^{2}(2i-1)},\ \ \ j=0,
\end{array}
\right.
\]%
\begin{equation}
\label{Y1.5}
\end{equation}
{for }$s=1,2${\ and }$i,j\in\mathcal{H}.${\ }
\end{lemma}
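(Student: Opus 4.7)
The plan is to reduce both claims---the Markov property of $(x_n^{(s)})_{n}$, and the specific form of the transition matrix---to the independence structure of records in a uniform random permutation. For a uniform random permutation $X$ of $\{1,\ldots,N\}$, the indicators $R_t^X := \mathbf{1}\{X_t > \max(X_1,\ldots,X_{t-1})\}$ are mutually independent Bernoulli variables with $P(R_t^X=1)=1/t$, a classical R\'enyi-type fact; the same is true of $R_t^Y$ arising from the independent permutation $\sigma$, and the two families are independent of each other. The combined indicators $R_t := R_t^X \vee R_t^Y$ are therefore themselves mutually independent Bernoulli variables with $P(R_t=1) = 1/t + 1/t - 1/t^{2} = (2t-1)/t^{2}$.

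For the Markov property, note that by the definition \eqref{Y1.4} the variable $x_{n+1}^{(s)}$ is precisely the first integer $t > x_n^{(s)}$ for which $R_t = 1$. Since $R_{x_n^{(s)}+1}, R_{x_n^{(s)}+2}, \ldots$ are independent of the history $R_1,\ldots,R_{x_n^{(s)}}$, the conditional distribution of $x_{n+1}^{(s)}$ depends only on the current position $x_n^{(s)}$. This establishes $(x_n^{(s)})$ as a discrete Markov chain on $\mathcal{H} = \{0,1,\ldots,N\}$, with state $0$ playing the role of an absorbing ``terminated'' state entered once no further records can occur (i.e.\ once both $X$ and $Y$ have revealed their global maxima by time $x_n^{(s)}$).

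For the off-diagonal transition probabilities, the starting calculation is the telescoping identity
$$P\bigl(R_{i+1}=0,\ldots,R_{j-1}=0,\ R_j=1 \ \bigm|\ R_i = 1\bigr) = \prod_{t=i+1}^{j-1}\left(\frac{t-1}{t}\right)^{2} \cdot \frac{2j-1}{j^{2}} = \left(\frac{i}{j-1}\right)^{2} \cdot \frac{2j-1}{j^{2}},$$
obtained by requiring $R_t=0$ for $i < t < j$ and $R_j = 1$. To match the stated numerator $2j(j-1)-i$ and the denominator factor $2i-1$, I would then refine this calculation by decomposing the conditioning event $\{R_i=1\}$ into the three disjoint sub-events ``record only in $X$'', ``record only in $Y$'', and ``record in both'', with respective conditional probabilities $(i-1)/(2i-1)$, $(i-1)/(2i-1)$, $1/(2i-1)$, analogously decomposing $\{R_j=1\}$, and tracking how the cross-terms between the sub-cases at $i$ and $j$ recombine. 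The absorbing entry $p_{i,0}$ is then fixed by the normalization $p_{i,0}=1-\sum_{j>i}p_{ij}^{(s)}$, with the direct probabilistic meaning that no further record occurs after time $i$; the boundary rows ($i=0$) follow immediately from the convention that the chain begins at the pre-history state $0$ and steps deterministically to state $1$. The main obstacle is exactly this bookkeeping: verifying that the algebraic recombination in the sub-case analysis at times $i$ and $j$ yields the specific numerator $2j(j-1)-i$ and denominator factor $2i-1$, rather than the naive product $(2j-1)i^{2}/(j(j-1))^{2}$ that a direct application of independence would suggest.
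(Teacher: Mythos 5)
The paper offers no proof of this lemma at all --- it is asserted with citations to \cite{Y-1,Y-12,Y-13} --- so there is nothing internal to compare your argument against; I judge it on its own merits. Your framework is the right one: R\'enyi's independence of the record indicators $R_t^X$, $R_t^Y$ in independent uniform permutations immediately gives that the combined indicators $R_t=R_t^X\vee R_t^Y$ are independent with $P(R_t=1)=(2t-1)/t^2$, and since the event $\{x_n^{(s)}=i\}$ is measurable with respect to $R_1,\dots,R_i$, the Markov property of the record-time sequence (\ref{Y1.4}) follows cleanly. Your telescoping computation $p_{ij}=\bigl(\tfrac{i}{j-1}\bigr)^2\tfrac{2j-1}{j^2}$ is also correct for the process as defined: it is the exact bivariant analogue of the monovariant matrix (\ref{e3.7}) (where the same argument gives $\tfrac{i}{j-1}\cdot\tfrac{1}{j}$), and its rows sum properly, since $\sum_{j=i+1}^{N}\tfrac{i^2(2j-1)}{j^2(j-1)^2}=1-\tfrac{i^2}{N^2}$ with the deficit $i^2/N^2$ being precisely the probability that both global maxima have already appeared by time $i$.

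The genuine gap is the final step you defer to ``bookkeeping'': the proposed refinement by sub-cases at times $i$ and $j$ cannot produce the stated formula, and you should not expect it to. Because the $R_t$ are mutually independent, the conditional law of $(R_{i+1},\dots,R_N)$ given $x_n^{(s)}=i$ is the same no matter which of the sub-events (record in $X$ only, in $Y$ only, or in both) realized the record at time $i$; averaging over any decomposition of the conditioning event therefore returns exactly your ``naive'' product. Moreover, the stated matrix (\ref{Y1.5}) cannot be correct as written: already for $i=1$, $N=3$ it gives $p_{12}=3/4$ and $p_{13}=11/36$, so the row sum exceeds $1$ before $p_{10}$ is even added (more generally $p_{1j}\sim 2/(j(j-1))$, whose tail sum alone approaches $2$). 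A direct enumeration for $N=3$ confirms $P(x_{n+1}=3\mid x_n=1)=\tfrac14\cdot\tfrac59=\tfrac{5}{36}$, i.e.\ your formula, not $11/36$. So the honest completion of your proof is to assert and verify $p_{ij}^{(s)}=\tfrac{i^2(2j-1)}{j^2(j-1)^2}$ for $1\le i<j$ with $p_{i0}^{(s)}=i^2/N^2$, and to flag that this contradicts (\ref{Y1.5}); promising to recover the printed numerator $2j(j-1)-i$ and denominator factor $2i-1$ by further case analysis leaves the proof resting on a step that provably fails.
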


{Thus, we have constructed two Markov sequences (\ref{Y1.4}) associated with
the most valuable share package choice process by means of which we can
calculate the quantity (\ref{Y1.3}), using the following result \cite{Y-12} as
the criterion. }

\begin{theorem}
{\label{Tm_Y2.2} Let the matrix }$\mathcal{P}:=\{p_{ij}^{(1)}:i,j\in
\mathcal{H}\}${\ of the transition probabilities be such that }$p_{ij}%
^{(1)}=0${\ for all }$i\in\mathcal{H}_{+},j\in\mathcal{H}_{-}${, where }%
\begin{align}
\mathcal{\mathcal{H}}_{+}  &  :=\{j\in\mathcal{\mathcal{H}}:(\mathcal{P}%
V^{(1)}(\tau_{2}))_{j}>V_{j}^{(1)}(\tau_{2})\},\nonumber\label{Y1.6}\\
\mathcal{\mathcal{H}}_{-}  &  :=\{j\in\mathcal{\mathcal{H}}:(\mathcal{P}%
V^{(1)}(\tau_{2}))_{j}\leq V_{j}^{(1)}(\tau_{2})\}.
\end{align}
{Then the Markov sequence (\ref{Y1.4}) for optimal choice of the most valuable
share package by the first client-buyer can be broken at the moment }$\tau
_{1}(l)=l=x_{\hat{\tau}_{1}(l)}^{(1)}\in\mathcal{H}${, which can be found
solving the inequalities (\ref{Y1.6}). }
\end{theorem}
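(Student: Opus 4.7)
The plan is to reduce the statement to an application of the general optimal-stopping criterion developed in Section~2 (Theorems~\ref{Tm_Y1.7}--\ref{Tm_Y1.9}), applied this time to the associated Markov chain $\{x_n^{(1)}\}$ whose one-step law is given by the bivariant probabilities of Lemma~\ref{Lm_Y2.1}. I would treat the sequence $\{V_n^{(1)}(\tau_2)\}_{n\in\mathcal{H}}$ as the revenue function $f$ of an auxiliary optimal-stopping problem on the phase space $\mathcal{H}$, so that the decomposition (\ref{Y1.6}) is exactly the canonical split of $\mathcal{H}$ into the continuation region $\mathcal{H}_+$ (where a single further step of the chain strictly improves on the current payoff) and the stopping region $\mathcal{H}_-$ (where stopping is at least as profitable). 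The goal is then to verify that the first-entrance time $\hat{\tau}_1:=\inf\{n\ge 0:\;x_n^{(1)}\in\mathcal{H}_-\}$ realises the supremum in the Bellman equation associated with $V^{(1)}(\tau_2)$, so that, via the identification $\tau_1=x_{\hat{\tau}_1}^{(1)}$ of (\ref{e3.8}), the announced threshold $\tau_1(l)=l=x_{\hat{\tau}_1(l)}^{(1)}$ emerges as the smallest index singled out by (\ref{Y1.6}).

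The main steps would run as follows. First, Theorems~\ref{Tm_Y1.7} and~\ref{Tm_Y1.8} identify the price $\tilde V$ of the auxiliary problem with the least excessive majorant of $V^{(1)}(\tau_2)$, characterised by the Bellman-type identity
\[
\tilde V_n=\max\{V_n^{(1)}(\tau_2),\,(\mathcal{P}\tilde V)_n\},\qquad n\in\mathcal{H}.
\]
Second, the non-crossing hypothesis $p_{ij}^{(1)}=0$ for $i\in\mathcal{H}_+$, $j\in\mathcal{H}_-$, combined with the strictly forward structure of the bivariant chain (\ref{Y1.5}) (whose only non-absorbing transitions $i\mapsto j$ have $j>i$), means that the chain cannot mix the two regions without first being absorbed in $\{0\}$; consequently a straightforward induction on $n$ forces $\tilde V_n=V_n^{(1)}(\tau_2)$ on $\mathcal{H}_-$ and $\tilde V_n=(\mathcal{P}\tilde V)_n>V_n^{(1)}(\tau_2)$ on $\mathcal{H}_+$. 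The stopping set $\Gamma_+:=\{n:\tilde V_n=V_n^{(1)}(\tau_2)\}$ therefore coincides with $\mathcal{H}_-$, and Theorem~\ref{Tm_Y1.9} yields optimality of $\hat\tau_1$; the identification $\tau_1(l)=x_{\hat\tau_1(l)}^{(1)}$ from (\ref{e3.8}) then delivers the conclusion.

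The hard part, as in the monovariant case (cf.\ Lemma~\ref{Lm_e3.3}), is to check that the decomposition (\ref{Y1.6}) genuinely possesses the non-crossing/threshold structure demanded by the hypothesis. Concretely, I expect to have to show that the sign of $(\mathcal{P}V^{(1)}(\tau_2))_j-V_j^{(1)}(\tau_2)$ is monotone in $j$, yielding $\mathcal{H}_+=\{1,\ldots,l-1\}$ and $\mathcal{H}_-=\{l,\ldots,N\}$ for some threshold $l$; the bivariant weights $[2j(j-1)-i]i^{2}/\{j^{2}(j-1)^{2}(2i-1)\}$ from (\ref{Y1.5}) are markedly more intricate than their monovariant counterparts (\ref{e3.7}), so this monotonicity will rest on a careful sign analysis of partial sums of the bivariant analogue of (\ref{e3.15}), and will almost certainly require a positivity restriction on the ratio $c_\alpha/\alpha$ (the bivariant counterpart of $c_\alpha\ge\alpha/2$). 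Once this monotone-threshold structure is in hand, the non-crossing hypothesis is automatic (the chain can only jump to larger indices or to $0$), and the remainder of the argument is essentially a transcription of the monovariant proof, with $l$ selected by $(\mathcal{P}V^{(1)}(\tau_2))_{l-1}>V_{l-1}^{(1)}(\tau_2)\ge(\mathcal{P}V^{(1)}(\tau_2))_l\ge \cdots$, i.e.\ exactly the inequalities (\ref{Y1.6}).
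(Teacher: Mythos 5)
The paper itself contains no proof of Theorem \ref{Tm_Y2.2}: like its monovariant counterpart in Section 3, it is simply quoted from Presman and Sonin \cite{Y-12} as a known stopping criterion, so there is no argument in the text against which to compare yours. That said, your frame is the right one and is exactly the one the paper implicitly relies on: treat $V_n^{(1)}(\tau_2)$ as the revenue function of an auxiliary optimal-stopping problem for the associated chain (\ref{Y1.4}), identify the price with the least excessive majorant via Theorems \ref{Tm_Y1.7}--\ref{Tm_Y1.8}, invoke Theorem \ref{Tm_Y1.9} for optimality of the first entrance into $\Gamma_+$, and translate back through $\tau_1=x_{\hat\tau_1}^{(1)}$ as in (\ref{e3.8}).

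There is, however, a real gap in your second step. The one-step-look-ahead argument you sketch (the ``monotone case'') needs the stopping region $\mathcal{H}_-$ to be closed under the kernel, i.e. $p_{ij}^{(1)}=0$ for $i\in\mathcal{H}_-$, $j\in\mathcal{H}_+$: only then does the backward induction give $(\mathcal{P}\tilde V)_n=(\mathcal{P}V^{(1)}(\tau_2))_n\le V_n^{(1)}(\tau_2)$ on $\mathcal{H}_-$ and hence $\tilde V=V^{(1)}(\tau_2)$ there. The hypothesis as literally stated is the reverse inclusion, and taken at face value it is both unusable and unsatisfiable here: if no transition leads from $\mathcal{H}_+$ into $\mathcal{H}_-$, a trajectory started in $\mathcal{H}_+$ never reaches $\mathcal{H}_-$, so the ``first entrance'' moment is not finite; and the chain (\ref{Y1.5}) has $p_{ij}^{(1)}>0$ for all $1\le i<j$, so the stated condition fails for any nontrivial threshold decomposition. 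Your ``straightforward induction'' in fact uses the reversed condition, which you extract from the forward-only structure of (\ref{Y1.5}) \emph{together with} the threshold form $\mathcal{H}_+=\{1,\dots,l-1\}$, $\mathcal{H}_-=\{l,\dots,N\}$ (plus the absorbing state) --- precisely the fact you defer to your ``hard part''. Two repairs: state and use the hypothesis in the closed-$\mathcal{H}_-$ form (this is almost certainly a sign typo inherited from the paper, already present in the monovariant version); and keep the verification of the threshold structure out of this proof, since it is the separate content of Theorem \ref{Tm_Y2.3} under $c_\alpha\ge\alpha/8$ (the ``positivity restriction on $c_\alpha/\alpha$'' you anticipate). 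With the corrected hypothesis the theorem is a clean conditional statement and your induction closes without any appeal to the threshold structure.
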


{The corresponding choice price function of the share package in (\ref{Y1.6})
is given as }%
\begin{align}
V_{n}^{(1)}(\tau_{2})  &  =c_{\alpha}[P\{X_{n}^{(1)}=N^{(x)},X_{\tau_{2}%
}^{(2)}\neq N^{(x)}\vee Y_{n}^{(1)}=N^{(y)},Y_{\tau_{2}}^{(2)}\neq
N^{(y)}\}+\nonumber\label{Y1.7}\\
+P\{X_{n}^{(1)}  &  =N^{(x)},X_{\tau_{2}}^{(2)}=N^{(x)},n<\tau_{2}\vee
Y_{n}^{(1)}=N^{(y)},Y^{(2)}\neq N^{(y)}\}+\nonumber\\
+P\{X_{n}^{(1)}  &  =N^{(x)},X_{\tau_{2}}^{(2)}\neq N^{(x)}\vee Y_{n}%
^{(1)}=N^{(y)},Y_{\tau_{2}}^{(2)}=N^{(y)},n<\tau_{2}\}+\nonumber\\
+P\{X_{n}^{(1)}  &  =N^{(x)},X_{\tau_{2}}^{(2)}=N^{(x)},n<\tau_{2}\vee
Y_{n}^{(1)}=N^{(y)},Y_{\tau_{2}}^{(2)}=N^{(y)},n<\tau_{2}\}-\nonumber\\
-\alpha\sum\limits_{k=1}^{n-1}\frac{k}{N^{2}}[P\{X_{k}^{(1)}  &  \neq
N^{(x)},X_{\tau_{2}}^{(2)}\neq N^{(x)}\wedge Y_{k}^{(1)}\neq N^{(y)}%
,Y_{\tau_{2}}^{(2)}\neq N^{(y)},n<\tau_{2}\}+\\
+P\{X_{k}^{(1)}  &  \neq N^{(x)},X_{\tau_{2}}^{(2)}\neq N^{(x)}\wedge
Y_{k}^{(1)}\neq N^{(y)},Y_{\tau_{2}}^{(2)}=N^{(y)},k<\tau_{2}\}+\nonumber\\
+P\{X_{k}^{(1)}  &  \neq N^{(x)},X_{\tau_{2}}^{(2)}=N^{(x)},k<\tau_{2}\wedge
Y_{k}^{(1)}\neq N^{(y)},Y_{\tau_{2}}^{(2)}\neq N^{(y)}\}+\nonumber\\
+P\{X_{k}^{(1)}  &  \neq N^{(x)},X_{\tau_{2}}^{(2)}=N^{(x)},k<\tau_{2}\wedge
Y_{k}^{(1)}\neq N^{(y)},Y_{\tau_{2}}^{(2)}=N^{(y)},k<\tau_{2}\}],\nonumber
\end{align}
{where the bank gift-parameter }$c_{\alpha}>0${\ is chosen from the condition
}$V_{n}^{(1)}(\tau_{2})>0${\ for all }$n=1,\ldots,N.${\ Thus, after
calculating the value of the function of price of choice (\ref{Y1.7}) of the
most valuable share package by the first client-buyer by means of Theorem
\ref{Tm_Y2.2}, the structure of the sets }$\mathcal{H}_{+}${\ and
}$\mathcal{H}_{-}${\ on the transition probabilities (\ref{Y1.5}) needs to be
analyzed, as we do in the next subsection. }

\subsection{{Associated Markov process and structural analysis of the model}}

{Taking into account the structure of the independent family of associated
}$\sigma-${\ algebras }$\{\mathcal{F}_{n}^{(s)},1\leq n\leq\tau_{1}%
\},s=1,2,${\ let us rewrite expression (\ref{Y1.7}) in the following form: }%
\begin{align}
V_{n}^{(1)}(\tau_{2})  &  =c_{\alpha}[P\{X_{n}^{(1)}=N^{(x)},X_{\tau_{2}%
}^{(2)}\neq N^{(x)}\}+P\{Y_{n}^{(1)}=N^{(y)},Y_{\tau_{2}}^{(2)}\neq
N^{(y)}\}-\nonumber\\
-P\{X_{n}^{(1)}  &  =N^{(x)},X_{\tau_{2}}^{(2)}\neq N^{(x)}\}P\{Y_{n}%
^{(1)}=N^{(y)},Y_{\tau_{2}}^{(2)}\neq N^{(y)}\}+\nonumber\\
+P\{X_{n}^{(1)}  &  =N^{(x)},X_{\tau_{2}}^{(2)}=N^{(x)},n<\tau_{2}%
\}+P\{Y_{n}^{(1)}=N^{(y)},Y^{(2)}\neq N^{(y)}\}-\nonumber\\
-P\{X_{n}^{(1)}  &  =N^{(x)},X_{\tau_{2}}^{(2)}=N^{(x)},n<\tau_{2}%
\}P\{Y_{n}^{(1)}=N^{(y)},Y^{(2)}\neq N^{(y)}\}+\nonumber\\
+P\{X_{n}^{(1)}  &  =N^{(x)},X_{\tau_{2}}^{(2)}\neq N^{(x)}\}+P\{Y_{n}%
^{(1)}=N^{(y)},Y_{\tau_{2}}^{(2)}=N^{(y)},n<\tau_{2}\}-\nonumber\\
-P\{X_{n}^{(1)}  &  =N^{(x)},X_{\tau_{2}}^{(2)}\neq N^{(x)}\}P\{Y_{n}%
^{(1)}=N^{(y)},Y_{\tau_{2}}^{(2)}=N^{(y)},n<\tau_{2}\}+\nonumber\\
+P\{X_{n}^{(1)}  &  =N^{(x)},X_{\tau_{2}}^{(2)}=N^{(x)},n<\tau_{2}%
\}+P\{Y_{n}^{(1)}=N^{(y)},Y_{\tau_{2}}^{(2)}=N^{(y)},n<\tau_{2}\}-\nonumber\\
-P\{X_{n}^{(1)}  &  =N^{(x)},X_{\tau_{2}}^{(2)}=N^{(x)},n<\tau_{2}%
\}P\{Y_{n}^{(1)}=N^{(y)},Y_{\tau_{2}}^{(2)}=N^{(y)},n<\tau_{2}\}-\nonumber\\
-\alpha\sum\limits_{k=1}^{n-1}\frac{k}{N^{2}}[P\{X_{k}^{(1)}  &  \neq
N^{(x)},X_{\tau_{2}}^{(2)}\neq N^{(x)}\}P\{Y_{k}^{(1)}\neq N^{(y)},Y_{\tau
_{2}}^{(2)}\neq N^{(y)}\}+\label{Y2.1}\\
+P\{X_{k}^{(1)}  &  \neq N^{(x)},X_{\tau_{2}}^{(2)}\neq N^{(x)}\}P\{Y_{k}%
^{(1)}\neq N^{(y)},Y_{\tau_{2}}^{(2)}=N^{(y)},k<\tau_{2}\}+\nonumber\\
+P\{X_{k}^{(1)}  &  \neq N^{(x)},X_{\tau_{2}}^{(2)}=N^{(x)},k<\tau
_{2}\}P\{Y_{k}^{(1)}\neq N^{(y)},Y_{\tau_{2}}^{(2)}\neq N^{(y)}\}+\nonumber\\
+P\{X_{k}^{(1)}  &  \neq N^{(x)},X_{\tau_{2}}^{(2)}=N^{(x)},k<\tau
_{2}\}P\{Y_{k}^{(1)}\neq N^{(y)},Y_{\tau_{2}}^{(2)}=N^{(y)},k<\tau
_{2}\}],\nonumber
\end{align}
{where we use the fact that the respective traces of both observations of the
values of usefulness are distributed independently. It follows from the
results of \cite{Y-1a} that (\ref{Y2.1}) can be rewritten as }%
\begin{align}
V_{n}^{(1)}(\tau_{2})  &  =c_{\alpha}\{2P\{X_{n}^{(1)}=N^{(x)}%
|\mathcal{\mathcal{F}}_{n}^{(1)}\vee\mathcal{\mathcal{F}}_{n}^{(2)}%
\}[P\{X_{\tau_{2}}^{(2)}\neq N^{(x)}\}+\nonumber\\
+P\{X_{\tau_{2}}^{(2)}  &  =N^{(x)},n<\tau_{2}\}]+2P\{Y_{n}^{(1)}%
=N^{(y)}|\mathcal{\mathcal{F}}_{n}^{(1)}\vee\mathcal{\mathcal{F}}_{n}%
^{(2)}\}\times\nonumber\\
\times\lbrack P\{Y_{\tau_{2}}^{(2)}  &  \neq N^{(y)}\}+P\{Y_{\tau_{2}}%
^{(2)}=N^{(y)},n<\tau_{2}\}]-\nonumber\\
-P\{X_{n}^{(1)}  &  =N^{(x)}|\mathcal{\mathcal{F}}_{n}^{(1)}\vee
\mathcal{\mathcal{F}}_{n}^{(2)}\}P\{Y_{n}^{(1)}=N^{(y)}|\mathcal{\mathcal{F}%
}_{n}^{(1)}\vee\mathcal{\mathcal{F}}_{n}^{(2)}\}\times\nonumber\\
\times\lbrack P\{X_{\tau_{2}}^{(2)}  &  \neq N^{(x)}\}+P\{X_{\tau_{2}}%
^{(2)}=N^{(x)},n<\tau_{2}\}]\times\nonumber\\
\times\lbrack P\{Y_{\tau_{2}}^{(2)}  &  \neq N^{(y)}\}+P\{Y_{\tau_{2}}%
^{(2)}=N^{(y)},n<\tau_{2}\}]\}-\\
-\alpha\sum\limits_{k=1}^{n-1}\frac{k}{N^{2}}P\{X_{k}^{(1)}  &  \neq
N^{(x)}\}P\{Y_{k}^{(1)}\neq N^{(y)}\}[P\{X_{\tau_{2}}^{(2)}\neq N^{(x)}%
)+\nonumber\\
+P\{X_{k}^{(1)}  &  \neq N^{(x)},X_{\tau_{2}}^{(2)}=N^{(x)},k<\tau
_{2}\}]\times\nonumber\\
\times\lbrack P\{Y_{\tau_{2}}^{(2)}  &  \neq N^{(y)}\}+P\{Y_{\tau_{2}}%
^{(2)}=N^{(y)},k<\tau_{2}\}],\nonumber
\end{align}
{or the equivalent form }%
\begin{align}
V_{n}^{(1)}(\tau_{2})  &  =c_{\alpha}\{2P\{X_{n}^{(1)}=N^{(x)}%
|\mathcal{\mathcal{F}}_{n}^{(1)}\vee\mathcal{\mathcal{F}}_{n}^{(2)}%
\}[P\{X_{\tau_{2}}^{(2)}\neq N^{(x)}\}+\nonumber\\
+P\{X_{\tau_{2}}^{(2)}  &  =N^{(x)},n<\tau_{2}\}]+2P\{Y_{n}^{(1)}%
=N^{(y)}|\mathcal{\mathcal{F}}_{n}^{(1)}\vee\mathcal{\mathcal{F}}_{n}%
^{(2)}\}\times\nonumber\\
\times\lbrack P\{Y_{\tau_{2}}^{(2)}  &  \neq N^{(y)}\}+P\{Y_{\tau_{2}}%
^{(2)}=N^{(y)},n<\tau_{2}\}]-\nonumber\\
-P\{X_{n}^{(1)}  &  =N^{(x)}|\mathcal{\mathcal{F}}_{n}^{(1)}\vee
\mathcal{\mathcal{F}}_{n}^{(2)}\}P\{Y_{n}^{(1)}=N^{(y)}|\mathcal{\mathcal{F}%
}_{n}^{(1)}\vee\mathcal{\mathcal{F}}_{n}^{(2)}\}\times\nonumber\\
\times\lbrack P\{X_{\tau_{2}}^{(2)}  &  \neq N^{(x)}\}+P\{X_{\tau_{2}}%
^{(2)}=N^{(x)},n<\tau_{2}\}]\times\nonumber\\
\times\lbrack P\{Y_{\tau_{2}}^{(2)}  &  \neq N^{(y)}\}+P\{Y_{\tau_{2}}%
^{(2)}=N^{(y)},n<\tau_{2}\}]\}-\label{Y2.2}\\
-\alpha\sum\limits_{k=1}^{n-1}\frac{k}{N^{2}}P\{X_{k}^{(1)}  &  \neq
N^{(x)}\}P\{Y_{k}^{(1)}\neq N^{(y)}\}[P\{X_{\tau_{2}}^{(2)}\neq N^{(x)}%
)+\nonumber\\
+P\{X_{k}^{(1)}  &  \neq N^{(x)},X_{\tau_{2}}^{(2)}=N^{(x)},k<\tau
_{2}\}]\times\nonumber\\
\times\lbrack P\{Y_{\tau_{2}}^{(2)}  &  \neq N^{(y)}\}+P\{Y_{\tau_{2}}%
^{(2)}=N^{(y)},k<\tau_{2}\}].\nonumber
\end{align}
{It should be remarked, that for }$n=1,\ldots,\tau_{1}${the conditional
probabilities }%
\begin{align}
P\{X_{n}^{(1)}  &  =N^{(x)}|\mathcal{\mathcal{F}}_{n}^{(1)}\vee
\mathcal{\mathcal{F}}_{n}^{(2)}\}=P\{X_{n}^{(1)}:X_{n}^{(1)}>\max
(X_{n-1}^{(1)},...,X_{1}^{(1)})\}=\nonumber\label{Y2.3}\\
&  =P\{X_{n}^{(1)}=N^{(x)}\}/P\{X_{n}^{(1)}>\max(X_{n-1}^{(1)},...,X_{1}%
^{(1)})\}=\nonumber\\
&  =\frac{1}{N}/(\frac{(n-1)!}{n!})=\frac{n}{N}1_{\{X_{n}^{(1)}>\max
(X_{n-1}^{(1)},...,X_{1}^{(1)})\}},
\end{align}
and {analogously, }%
\begin{equation}
P\{Y_{n}^{(1)}=N^{(x)}|\mathcal{\mathcal{F}}_{n}^{(1)}\vee\mathcal{\mathcal{F}%
}_{n}^{(2)}\}=\frac{n}{N}1_{\{X_{n}^{(1)}>\max(X_{n-1}^{(1)},...,X_{1}%
^{(1)})\}}. \label{Y2.4}%
\end{equation}
{It also is easy to compute that for every }$k=1,\ldots,n${\ the conditional
probabilities }%
\begin{align}
P\{X_{\tau_{2}}^{(2)}  &  \neq N^{(x)}\}+P\{X_{\tau_{2}}^{(2)}=N^{(x)}%
,k<\tau_{2}\}=1-P\{X_{\tau_{2}}^{(2)}=N^{(x)},\tau_{2}\leq k\},\nonumber\\
P\{Y_{\tau_{2}}^{(2)\ddot{}}  &  \neq N^{(y)}\}+P\{Y_{\tau_{2}}^{(2)}%
=N^{(y)},k<\tau_{2}\}=1-P\{X_{\tau_{2}}^{(2)}=N^{(x)},\tau_{2}\leq k\}.
\label{Y2.5}%
\end{align}
{Thus, substituting the expressions (\ref{Y2.3})-(\ref{Y2.5}) in (\ref{Y2.2})
for all }$n=1,\ldots,\tau_{1},${\ we find that }%
\begin{align}
V_{n}^{(1)}(\tau_{2})  &  =c_{\alpha}[\frac{2n}{N}(1-P\{X_{\tau_{2}}%
^{(2)}=N^{(x)},\tau_{2}\leq n\})\nonumber\\
+\frac{2n}{N}(1-P\{Y_{\tau_{2}}^{(2)}  &  =N^{(y)},\tau_{2}\leq
n\})-\label{Y2.6}\\
-\frac{n^{2}}{N^{2}}(1-P\{X_{\tau_{2}}^{(2)}  &  =N^{(x)},\tau_{2}\leq
n\})(1-P\{Y_{\tau_{2}}^{(2)}=N^{(y)},\tau_{2}\leq n\})]-\nonumber\\
-\frac{\alpha(N-1)^{2}}{N^{2}}\sum\limits_{k=1}^{n-1}\frac{k}{N^{2}%
}(1-P\{X_{\tau_{2}}^{(2)}  &  =N^{(x)},\tau_{2}\leq k\})(1-P\{Y_{\tau_{2}%
}^{(2)}=N^{(y)},\tau_{2}\leq k\}).\nonumber
\end{align}
{In order to calculate the expression (\ref{Y2.6}) first we need to find the
conditional probabilities }$P\{X_{\tau_{2}}^{(2)}=N^{(x)},\tau_{2}\leq
k\}${\ and }$P\{Y_{\tau_{2}}^{(2)}=N^{(y)},\tau\leq k\}${\ for every
}$k=1,\ldots,n${ from the Markov sequence (\ref{Y1.4}) with the transitional
probabilities (\ref{Y1.5}). Employing the methods described in
\cite{Y-1a,Y-1b}, it is easy to show that probabilities with the threshold
strategy }$\hat{\tau_{2}(l)\in\mathcal{\mathcal{H}}\text{ are}}${\ }%
\begin{equation}
P\{X_{\tau_{2}}^{(2)}=N^{(x)},\tau_{2}\leq k\}:=P\{X_{\tau_{2}(l)}%
^{(2)}=N^{(x)},\tau_{2}(l)\leq k\}=0 \label{Y2.7}%
\end{equation}
{for }$k=1,\ldots,l-1${\ under the condition of optimal choice, when }%
$\tau_{1}(l)=x_{\hat{\tau}_{1}(l)}^{(1)}=l${, and }$\tau_{2}:=\tau_{2}%
(l)>l\in\mathcal{H}.${\ If }$1\leq k\leq N,${\ then }%
\begin{align}
P\{X_{\tau_{2}(l)}^{(2)}  &  =N^{(x)},\tau_{2}(l)\leq k\}=\sum\limits_{j=l}%
^{k}P\{X_{\tau_{2}(l)}^{(2)}=N^{(x)},\tau_{2}(l)\leq j\}=\nonumber\\
&  =\sum\limits_{j=l}^{k}P\{X_{\tau_{2}(l)}^{(2)}=N^{(x)}|\tau_{2}%
(l)=j\}P\{\tau_{2}(l)=j\}=\nonumber\\
&  =\sum\limits_{j=l}^{k}P\{X_{\tau_{2}(l)}^{(2)}=N^{(x)}|\mathcal{\mathcal{F}%
}_{j}^{(1)}\vee\mathcal{\mathcal{F}}_{j}^{(2)}\}P\{\tau_{2}(l)=j\}=\nonumber\\
&  =\sum\limits_{j=l}^{k}P\{X_{j}^{(2)}=N^{(x)}|\mathcal{\mathcal{F}}%
_{j}^{(1)}\vee\mathcal{\mathcal{F}}_{j}^{(2)}\}P\{\tau_{2}(l)=j\}=\nonumber\\
&  =\sum\limits_{j=l}^{k}\frac{j}{N}P\{\tau_{2}(l)=j\}, \label{Y2.8}%
\end{align}
{and similarly }%
\begin{equation}
P\{Y_{\tau_{2}(l)}^{(2)}=N^{(y)},\tau_{2}(l)\leq k\}=\sum\limits_{j=l}%
^{k}\frac{j}{N}P\{\tau_{2}(l)=j\} \label{Y2.9}%
\end{equation}
{for }$k=l,\ldots,N${. In order to calculate the probabilities }$P\{\tau
_{2}(l)=j\},j\in\mathcal{H},${\ we remark that by Theorem \ref{Tm_Y2.2} in
accordance with the threshold strategy }$\hat{\tau}_{2}(l)\in\mathcal{H}${\ on
the basis of the Kolmogorov equation from the relationships }%
\begin{equation}
P\{x_{\hat{\tau}_{2}}^{(2)}=j\}=\left\{
\begin{array}
[c]{ccc}%
1, &  & j=1,\\
\sum\limits_{i=1}^{j-1}P\{x_{\hat{\tau}_{2}}^{(2)}=i\}p_{ij}^{(1)}; &  & 2\leq
j\leq l-1,\\
\sum\limits_{i=1}^{l-1}P\{x_{\hat{\tau}_{2}}^{(2)}=i\}p_{ij}^{(1)}; &  & 1\leq
j\leq N,
\end{array}
\right.  \label{Y2.10}%
\end{equation}
{one obtains (\ref{Y2.10}) }%
\begin{equation}
P\{x_{\hat{\tau}_{2}}^{(2)}=j\}=\left\{
\begin{array}
[c]{ccc}%
\sum\limits_{k=1}^{j-1}(\mathcal{P}^{k})_{1,j}; &  & 2\leq j\leq l-1,\\
\sum\limits_{s=1}^{l}\sum\limits_{k=1}^{s-1}(\mathcal{P}^{k})_{1,s}%
p_{s,j}^{(2)}; &  & 1\leq j\leq N,
\end{array}
\right.  \label{Y2.11}%
\end{equation}
{where }$\mathcal{P}:=\{p_{ij}^{(2)}:i,j=\overline{0,N}\}${\ is the matrix of
transitional probabilities (\ref{Y1.5}) of the associated Markov process for
the process of choice of the most desirable share package by the first
client-buyer. For convenience, we denote the quantity }$P\{\tau_{2}%
(l)=j\}:=h_{j},j=0,\ldots,N,${\ which is defined by means of (\ref{Y2.11}).
Since }$P\{\tau_{2}(l)=j\}=0${\ for all }$1\leq j\leq l-1${, for the price
function of the optimal choice (\ref{Y2.6}), one obtains }%
\begin{align}
V_{n}^{(1)}  &  =c_{\alpha}[\frac{4n}{N}(1-\sum\limits_{j=l}^{n}\frac{j}%
{N}h_{j})-\frac{n^{2}}{N^{2}}(1-\sum\limits_{j=l}^{n}\frac{j}{N}h_{j}%
)^{2}]-\nonumber\\
&  -\frac{\alpha(N-1)^{2}}{N^{2}}\sum\limits_{k=1}^{n-1}\frac{k}{N^{2}}%
(1-\sum\limits_{j=l}^{n}\frac{j}{N}h_{j})^{2} \label{Y2.12}%
\end{align}
{for all }$n=1,\ldots,\tau_{1}${. Using Theorem \ref{Tm_Y2.2}, let us find the
quantity }$l\in\mathcal{H}${\ satisfying the inequalities (\ref{Y1.6}), which
we can rewrite in the more convenient form: }%
\begin{equation}
V_{l-1}^{(1)}<(\mathcal{P}V^{(1)})_{l-1},(\mathcal{P}V^{(1)})_{l}\leq
V_{l}^{(1)}. \label{Y2.13}%
\end{equation}
{The determining respective inequalities are given by the following analytical
expressions: }%
\begin{equation}%
\begin{array}
[c]{c}%
c_{\alpha}\sum\limits_{k=l}^{N}\frac{(l-1)^{2}[2k(k-1)-1]}{(2l-3)k^{2}%
(k-1)^{2}}\left[  \frac{4k}{N}(1-\sum\limits_{j=l}^{k}\frac{j}{N}h_{j}%
)-\frac{k^{2}}{N^{2}}(1-\sum\limits_{j=l}^{k}\frac{j}{N}h_{j})^{2}\right]  -\\
-\frac{\alpha(N-1)^{2}}{N^{2}}\sum\limits_{k=l}^{N}\frac{(l-1)^{2}%
[2k(k-1)-1]}{(2l-3)k^{2}(k-1)^{2}}\sum\limits_{s=1}^{k-1}\frac{s}{N^{2}%
}(1-\sum\limits_{j=l}^{s}\frac{j}{N}h_{j})^{2}>\\
>c_{\alpha}(\frac{4(l-1)}{N}-\frac{(l-1)^{2}}{N^{2}})-\frac{\alpha(N-1)^{2}%
}{N^{2}}\sum\limits_{k=1}^{l-1}\frac{k}{N^{2}}(1-\sum\limits_{j=l}^{k}\frac
{j}{N}h_{j})^{2}%
\end{array}
\label{Y2.14}%
\end{equation}
{and }%
\begin{equation}%
\begin{array}
[c]{c}%
c_{\alpha}\sum\limits_{k=l}^{N}\frac{l^{2}[2k(k-1)-1]}{(2l-1)k^{2}(k-1)^{2}%
}\left[  \frac{4k}{N}(1-\sum\limits_{j=l}^{k}\frac{j}{N}h_{j})-\frac{k^{2}%
}{N^{2}}(1-\sum\limits_{j=l}^{k}\frac{j}{N}h_{j})^{2}\right]  -\\
-\frac{\alpha(N-1)^{2}}{N^{2}}\sum\limits_{k=l}^{N}\frac{l^{2}[2k(k-1)-1]}%
{(2l-1)k^{2}(k-1)^{2}}\sum\limits_{s=1}^{k-1}\frac{s}{N^{2}}(1-\sum
\limits_{j=l}^{s}\frac{j}{N}h_{j})^{2}\leq\\
\leq c_{\alpha}(\frac{4l}{N}-\frac{l^{2}}{N^{2}})-\frac{\alpha(N-1)^{2}}%
{N^{2}}\sum\limits_{k=1}^{l-1}\frac{k}{N^{2}}(1-\sum\limits_{j=l}^{k}\frac
{j}{N}h_{j})^{2}.
\end{array}
\label{Y2.15}%
\end{equation}

{Let }$l\in\mathcal{H}${\ satisfy the inequalities (\ref{Y2.14}) and
(\ref{Y2.15}). As a result, we obtain the following algebraic equation: }%
\begin{equation}%
\begin{array}
[c]{c}%
\frac{2c_{\alpha}l}{2l-1}\left[  4\frac{l}{N}\ln\frac{N}{l}-\frac{2l^{2}%
}{N^{2}}\ln^{2}\frac{N}{l}-\frac{(N-l)l}{N^{2}}+\frac{2l^{3}}{N^{3}}(\frac
{N}{l}\ln\frac{N}{l}-\frac{N}{l}+1)-\right. \\
-\left.  \frac{l^{4}}{N^{4}}(\frac{N}{l}\ln^{2}\frac{N}{l}-\frac{2N}{l}%
\ln\frac{N}{l}+\frac{N}{l}-1)\right]  -\\
-\alpha\left[  \frac{l^{2}(N-l)}{2N^{3}}+\frac{l(N-l)^{2}}{2N^{3}}-\frac
{l^{2}}{N^{2}}\ln\frac{N}{l}+\frac{l^{2}(N-l)}{N^{3}}+\frac{l^{2}(N-l)^{2}%
}{2N^{4}}+\right. \\
\left.  +\frac{l^{4}}{2N^{4}}(\frac{N}{l}\ln^{2}\frac{N}{l}-\frac{3N}{l}%
\ln\frac{N}{l}+\frac{7N}{2l}-4+\frac{l}{2N})\right]  =\\
=c_{\alpha}(\frac{4l}{N}-\frac{l^{2}}{N^{2}})-\frac{\alpha(N-1)^{2}%
(l-1)l}{2N^{4}},
\end{array}
\label{Y2.16}%
\end{equation}
{where we have taken into account that in accordance with (\ref{Y2.12}),
}$h_{j}=(j-1)p_{1j}^{(2)},${\ }$j=l,\ldots,N${. Then it is straightforward to
verify the following result.}

\begin{theorem}
{\label{Tm_Y2.3} When the procedure of imposing a fine on a buyer is
progressive linear and agrees with the portfolio volume, the Markov sequences
(\ref{Y1.4}) allow the division of the phase space }$\mathcal{H}${\ into the
direct sum of subspaces }$\mathcal{H}_{+}=\{0,\ldots,l-1\}${\ and
}$\mathcal{H}_{-}=\{l,\ldots,N\}${\ under the condition that the promotional
parameter }$c_{\alpha}\geq\alpha/8>0.${\ }
\end{theorem}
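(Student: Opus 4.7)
The plan is to mimic the argument behind Lemma \ref{Lm_e3.3}, now for the bivariant price function (\ref{Y2.12}) with the transition probabilities (\ref{Y1.5}). Specifically, I would verify that under the stipulated condition $c_{\alpha}\geq\alpha/8$, the quantity $D_n:=V_n^{(1)}(\tau_2)-(\mathcal{P}V^{(1)}(\tau_2))_n$ is nondecreasing in $n$, so that it changes sign exactly once, at a threshold $n=l$; Theorem \ref{Tm_Y2.2} then yields the advertised decomposition $\mathcal{H}_+=\{0,\ldots,l-1\}$, $\mathcal{H}_-=\{l,\ldots,N\}$.

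First I would substitute the closed form (\ref{Y2.12}) and the transition kernel (\ref{Y1.5}) into the two defining inequalities (\ref{Y1.6}), yielding the pair (\ref{Y2.14})--(\ref{Y2.15}). Setting $n=l-1$ in one and $n=l$ in the other isolates the discrete increment $D_l-D_{l-1}$ in closed form. Second, I would split this increment into a \emph{reward} contribution, carrying the factor $c_\alpha$ and arising from the bivariant bracket $\tfrac{4n}{N}(1-\cdot)-\tfrac{n^2}{N^2}(1-\cdot)^2$, and a \emph{fee} contribution, carrying the factor $\alpha(N-1)^2/N^2$ and the cumulative sums in $k$. The reward contribution is positive and, up to lower-order corrections, behaves like $2c_\alpha/N$ times factors of order one, whereas the fee contribution behaves like $\alpha l/N^3$ times cumulative sums that are uniformly bounded. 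The sharp threshold $c_\alpha\geq\alpha/8$ emerges when one balances the worst-case influence of the bivariant quadratic correction $-n^2/N^2$ (contributing a factor $1/4=(1/2)^2$ through the evaluation of $n/N$ at its mean value $1/2$) against an additional factor of $1/2$ coming from the progressive linear structure of the fine, so that $\tfrac14\cdot\tfrac12=\tfrac18$ is the critical multiple of $\alpha$ at which the two contributions exactly compensate.

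Third, once monotonicity of $D_n$ is established, the sign structure of $D_n$ determines the partition uniquely: $D_n<0$ for $n=0,\ldots,l-1$ identifies $\mathcal{H}_+$, and $D_n\geq 0$ for $n=l,\ldots,N$ identifies $\mathcal{H}_-$. The existence of such an $l\in\mathcal{H}$ is guaranteed by the positivity of $V_n^{(1)}$ (ensured by the assumption $c_\alpha\geq\alpha/8>0$) together with the eventually dominating cumulative fee for large $n$, which forces $D_n\geq 0$ before $n$ reaches $N$.

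The main obstacle will be the algebraic bookkeeping: expression (\ref{Y2.12}) contains the nested quadratic $(1-\sum_{j=l}^n\tfrac{j}{N}h_j)^2$ with $h_j$ determined only implicitly through the Kolmogorov recursion (\ref{Y2.10})--(\ref{Y2.11}), so extracting the sharp constant $1/8$ (rather than a weaker multiple of $\alpha$) demands careful estimates of these conditional probabilities together with the identity $h_j=(j-1)p_{1j}^{(2)}$ used in (\ref{Y2.16}). In practice an asymptotic reduction in the spirit of Section 3.3 --- replacing $l/N$ by a continuous parameter $z\in(0,1)$ in the $N\to\infty$ limit, as is done implicitly in (\ref{Y2.16}) --- is the cleanest way to make the balance transparent and to verify that no constant strictly smaller than $1/8$ is sufficient.
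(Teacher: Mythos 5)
You should know at the outset that the paper itself supplies no proof of Theorem \ref{Tm_Y2.3}: it is introduced with the phrase ``it is straightforward to verify,'' exactly as Lemma \ref{Lm_e3.3} is introduced with ``readily verified.'' So there is no argument in the text to compare yours against, and your proposal has to stand on its own. As it stands, it is an outline with two genuine gaps rather than a proof.

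First, the central analytic claim --- that $D_n:=V_n^{(1)}(\tau_2)-(\mathcal{P}V^{(1)}(\tau_2))_n$ is nondecreasing in $n$ and hence changes sign exactly once --- is asserted, not established. That single sign change \emph{is} the content of the theorem (it is precisely what makes $\mathcal{H}_+$ and $\mathcal{H}_-$ contiguous intervals meeting at $l$), so announcing it as a step to be ``verified'' leaves the whole burden of proof undischarged. Computing $D_l-D_{l-1}$ from (\ref{Y2.14})--(\ref{Y2.15}) tells you about the increment at the threshold, but monotonicity must be controlled for \emph{all} $n$, through the implicitly defined $h_j$ of (\ref{Y2.10})--(\ref{Y2.11}); nothing in your sketch does this. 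Second, your numerology for the constant $1/8$ ($\tfrac14$ from evaluating $n/N$ at its ``mean value'' $\tfrac12$, times $\tfrac12$ from the progressive fine) does not reflect the structure of (\ref{Y2.12}). The natural source of the constant is the direct analogue of Lemma \ref{Lm_e3.3}: there the reward enters as $c_{\alpha}\tfrac{n}{N}(1-\cdot)$ and the cumulative fine is bounded by $\alpha\sum_{k=1}^{n-1}k/N^{2}\leq\alpha/2$, giving $c_{\alpha}\geq\alpha/2$; in the bivariant price (\ref{Y2.12}) the reward bracket carries the coefficient $\tfrac{4n}{N}$ (two variants, inclusion--exclusion), so the same comparison gives $4c_{\alpha}\geq\alpha/2$, i.e.\ $c_{\alpha}\geq\alpha/8$ --- consistent with the paper's later normalization $\beta:=4c_{\alpha}/\alpha\geq1/2$. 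Finally, retreating to the $N\rightarrow\infty$ limit of Section 4.3 cannot close either gap: the theorem is a statement about the finite phase space $\mathcal{H}=\{0,\ldots,N\}$, and the asymptotic equation (\ref{Y3.1}) presupposes the decomposition you are trying to prove. Nor does the theorem claim that $1/8$ is sharp, so your proposed verification that ``no smaller constant suffices'' is not needed and, via asymptotics alone, not obtainable.
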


{Since the expression (\ref{Y2.16}) is quite complicated when the quantity
}$N\in\mathbb{Z}_{+}${\ is finite, we shall next carry out an asymptotic
analysis under the condition that }$\lim_{N\rightarrow\infty}l(N)/N:=z\in
(0,1),${\ exists, where }$l(N)\in\mathcal{H}${\ is a corresponding solution of
the given equation. }

\subsection{{Asymptotic analysis of the optimal share package choice
strategy}}

{Under the condition that }$\lim_{N\rightarrow\infty}l(N)/N=z\in(0,1)${\ we
obtain \cite{Y-Ge,Y-14} the following transcendental equation from the
algebraic expression (\ref{Y2.16}): }%

\begin{equation}%
\begin{array}
[c]{c}%
\beta(4z\ln z+2z^{2}\ln^{2}z+2z^{2}\ln z+z^{3}\ln^{2}z+2z^{3}\ln
z+5z-z^{3}-z^{4})+\\
+(2z+2z^{2}\ln z+2z^{2}(1-z)^{2}+2z^{3}\ln^{2}z+3z^{3}\ln z+10z^{3}%
-z^{4}+z^{5})=0,
\end{array}
\label{Y3.1}%
\end{equation}
{where we denote }$4c_{\alpha}/\alpha:=\beta\geq1/2${. The transcendental
equation (\ref{Y3.1}) has only one solution on the interval }$(0,1)${, which
can be found by means of numerical methods. }

{The approximate solutions of the equation (\ref{Y3.1}) on the interval
}$(0,1)${\ for some values of }$\beta\in\lbrack0.5,1.5]${\ are shown in the
Table 1. }

{\bigskip}

{Table 1. Real solutions of the equation (\ref{Y3.1}) for different values of
the coefficient }$\beta${. }%
\[
\left\vert
\begin{tabular}
[c]{llllllllllll}\hline
$\beta$ & $\left\vert 0.5\right.  $ & $\left\vert 0.6\right.  $ & $\left\vert
0.7\right.  $ & $\left\vert 0.8\right.  $ & $\left\vert 0.9\right.  $ &
$\left\vert 1.0\right.  $ & $\left\vert 1.1\right.  $ & $\left\vert
1.2\right.  $ & $\left\vert 1.3\right.  $ & $\left\vert 1.4\right.  $ &
$\left\vert 1.5\right.  $\\\hline
$z^{\ast}$ & $\left\vert 0.155\right.  $ & $\left\vert .171\right.  $ &
$\left\vert .186\right.  $ & $\left\vert .199\right.  $ & $\left\vert
.210\right.  $ & $\left\vert .220\right.  $ & $\left\vert .228\right.  $ &
$\left\vert .236\right.  $ & $\left\vert .243\right.  $ & $\left\vert
.249\right.  $ & $\left\vert .254\right.  $\\\hline
\end{tabular}
\ \ \right\vert \
\]
{As a result, we can formulate the following optimal strategy of the
client-buyer behavior in the stock market in terms of bivariant usefulness: At
a large enough bulk }$N\in\mathbb{Z}_{+}${\ of share packages within a bank
portfolio, the first client's optimal behavior strategy for choosing the best
share package is the relative quality value monitoring of }$l=z^{\ast}%
N\in\mathbb{Z}_{+}${\ packages, followed by the choice of the first share
package with bivariant quality surpassing all of the preceding.}

\section{{Concluding Remarks}}

{In contrast to our monovariant profit model, we have used a rather special
discrete Markov process on the phase space }$\mathcal{H}=\{0,1,...,N\}$ to
develop a fairly realistic  simulation in which to formulate an statistically
optimal strategy for choosing the most desirable share package in a zeitnot
market with \ and multiple client-buyers. We showed that {when the number of
share packages in the bank portfolio is sufficiently large, the buyer's
optimal strategy of choice of the most valuable share package is defined by
the universal transcendental equation (\ref{Y3.1}) that depends on the
parameter }$\beta:=4c_{\alpha}/\alpha\geq1/2${, which characterizes the bank
parameter of encouragement and fine (or incentive or disincentive). The loss
risk on the part of the bank, the share-seller, is the lowest when }%
$\beta=1/2${, which leads to the invariant form of the equation (\ref{Y3.1})
with respect to this parameter. In this case, the buyer can skim only }%
$\simeq15.54\%${\ of the share package portfolio to optimally choose the most
valuable share package in the ordered (by desirability) list of packages
following those that are skimmed. }

{It should be also emphasized that when there is no a priori information about
the qualitative characteristics of the portfolios, our statistical model of
zeitnot \ stock behavior of the client-buyers in the bivariant profit function
case is a rather simplified version of the real situation. Moreover, it should
be noted that we have consciously assumed that every client possesses
sufficient financial capital to purchase any bank portfolio share package. If
the client-buyers do not have sufficient resources to buy some of the share
packages for the price offered by the bank, our model and subsequent analysis
would have to be modified. Additional alterations and further development of
our approach would also be required if there is a large number of competing
client-buyers or when there are more profit related parameters affecting the
choice of share packages. But all of these extra degrees of variability are
actually quite typical in large scale banking portfolio markets, so we plan to
investigate these more complex cases in our future research. }

\end{document}